\newcommand{\dep}{\ensuremath{\mathop{=\!}}\xspace}
\newcommand{\pci}[3]{{#2 \perp\!\!\!\perp_{#1} #3}}
\newcommand{\var}{\ensuremath{\mathsf{var}}\xspace}
\newcommand{\cl}{\ensuremath{\mathsf{cl}}\xspace}
\newcommand{\desclist}{\ensuremath{\mathsf{desclist}}\xspace}
\newcommand{\attrlist}{\ensuremath{\mathsf{attrlist}}\xspace}
\newcommand{\fdlist}{\ensuremath{\mathsf{fdlist}}\xspace}
\newcommand{\fdclosure}{\ensuremath{\mathsf{fdclosure}}\xspace}
\newcommand{\X}{\mathbb{X}}
\newcommand{\scc}{\ensuremath{\mathsf{scc}}\xspace}
\begin{document}

\title[The Implication Problem for FDs, UMIs, and UMDEs.]{The Implication Problem for Functional Dependencies and Variants of Marginal Distribution Equivalences}

\author{Minna Hirvonen}
\email{minna.hirvonen@helsinki.fi}
\orcid{0000-0002-2701-9620}
\affiliation{%
  \institution{University of Helsinki}
  \streetaddress{PL 68 (Pietari Kalmin katu 5)}
  \city{Helsinki}
  \country{Finland}
  \postcode{00014}
}


\begin{abstract}
We study functional dependencies together with two different probabilistic dependency notions: unary marginal identity and unary marginal distribution equivalence. A unary marginal identity states that two variables $x$ and $y$ are identically distributed. A unary marginal distribution equivalence states that the multiset consisting of the marginal probabilities of all the values for variable $x$ is the same as the corresponding multiset for $y$. We present a sound and complete axiomatization for the class of these dependencies and show that  it has Armstrong relations. The axiomatization is infinite, but we show that there can be no finite axiomatization. The implication problem for the subclass that contains only functional dependencies and unary marginal identities can be simulated with functional dependencies and unary inclusion atoms, and therefore the problem is in polynomial-time. This complexity bound also holds in the case of the full class, which we show by constructing a polynomial-time algorithm. 
\end{abstract}

\begin{CCSXML}
<ccs2012>
<concept>
<concept_id>10002951.10002952.10002953.10002955</concept_id>
<concept_desc>Information systems~Relational database model</concept_desc>
<concept_significance>500</concept_significance>
</concept>
<concept>
<concept_id>10003752.10003790</concept_id>
<concept_desc>Theory of computation~Logic</concept_desc>
<concept_significance>100</concept_significance>
</concept>
<concept>
<concept_id>10002950.10003648</concept_id>
<concept_desc>Mathematics of computing~Probability and statistics</concept_desc>
<concept_significance>300</concept_significance>
</concept>
</ccs2012>
\end{CCSXML}

\ccsdesc[500]{Information systems~Relational database model}
\ccsdesc[100]{Theory of computation~Logic}
\ccsdesc[300]{Mathematics of computing~Probability and statistics}

\keywords{Armstrong relations, complete axiomatization, functional dependence, inclusion dependence, marginal distribution equivalence, polynomial-time algorithm, probabilistic team semantics}


\maketitle

\section{Introduction}
Notions of dependence and independence are central to many areas of science. In database theory, the study of dependencies (or logical integrity constraints) is a central topic because it has applications to database design and many other data management tasks. For example, functional dependencies and inclusion dependencies are commonly used in practice as primary key and foreign key constraints. In this paper, we study functional dependencies together with unary marginal identity and unary marginal distribution equivalence. The latter two are probabilistic dependency notions that compare distributions of two variables. A unary marginal identity states that two variables $x$ and $y$ are identically distributed. A unary marginal distribution equivalence states that the
multiset consisting of the marginal probabilities of all the values for variable $x$ is
the same as the corresponding multiset for $y$. Marginal identity can actually be viewed as a probabilistic version of inclusion dependency; it is sometimes called ``probabilistic inclusion'' \cite{DurandHKMV18}.

We consider the so-called implication problem for the class of these dependencies. The \textit{implication problem} for a class of dependencies is the problem of deciding whether for a given finite set $\Sigma\cup\{\sigma\}$ of dependencies from the class, any database that satisfies every dependency from $\Sigma$ also satisfies $\sigma$. If the databases are required to be finite, the problem is called the \textit{finite implication problem}. Otherwise, we speak of the \textit{unrestricted implication problem}. We axiomatize the finite implication problem for functional dependence, unary marginal identity, and unary marginal distribution equivalence over uni-relational databases that are complemented with a probability distribution over the set of tuples appearing in the relation.

This paper is an extended version of a conference paper of the same title \cite{hirvonen22}. The main results are divided in sections as follows. In Section \ref{AxiomatizationSection}, we introduce the axiomatization and show that it is sound and complete. The axioms contain the so-called $k$-cycle rules for all odd natural numbers $k$, which make the axiomatization infinite. We also demonstrate that this is not due to an inappropriate choice of axioms, since the section includes a proof which shows that there can be no finite axiomatization. Section \ref{simulation_ch} contains results for the subclass without unary marginal distribution equivalences, i.e., the class that contains only functional dependencies and unary marginal identities. We show that for this subclass the implication problem can be simulated with functional dependencies and unary inclusion dependencies. Since the implication problem for functional dependencies and unary inclusion dependencies is known to be in polynomial time, we obtain from the simulation result a proof for the fact that the implication problem for the subclass is in polynomial time. In Section \ref{ComplexitySection}, we construct a polynomial time algorithm for the full class of functional dependencies and unary marginal identities and marginal distribution equivalences.

The implication problem that we axiomatize contains one qualitative class (functional dependence) and two probabilistic classes (marginal identity and marginal distribution equivalence) of dependencies, so we have to consider how these different kinds of dependencies interact. Some probabilistic dependencies have already been studied separately. The implication problem for probabilistic independence has been axiomatized over 30 years ago in \cite{geiger:1991}. More recently, the implication problem for marginal identity (over finite probability distributions) was axiomatized in \cite{hannula2021tractability}. The study of joint implication problems for different probabilistic and relational atoms have potential for practical applications because various probabilistic dependency notions appear in many situations. An example of an interesting class of probabilistic dependencies is probabilistic independencies together with marginal identities. In probability theory and statistics, random variables are often assumed to be independent and identically distributed (IID), a property which can be expressed with  probabilistic independencies and marginal identities. Another example comes from the foundations of quantum mechanics where functional dependence and probabilistic conditional independence can be used to express certain properties of hidden-variable models \cite{abramsky2021team},\cite{ALBERT2022103088}.

For practical purposes, it is important to consider implication problems also from a computational point of view: the usability of a class of dependencies, e.g. in database design, depends on the computational properties of its implication problem. We show that the implication problem for functional dependencies, unary marginal identities, and unary marginal distribution equivalences has a polynomial-time algorithm. 

A class of dependencies is said to have \textit{Armstrong relations} if for any finite set $\Sigma$ of dependencies from the class, there is a relation that satisfies a dependency $\sigma$ in the class if and only if $\sigma$ is true for every relation that satisfies $\Sigma$. An Armstrong relation can be used as a convenient representation of a dependency set. If a class of dependecies has Armstrong relations, then the implication problem for a fixed set $\Sigma$ and an arbitrary dependency $\sigma$ is reduced to checking whether the Armstrong relation for $\Sigma$ satisfies $\sigma$.  When Armstrong axiomatized functional dependence in \cite{armstrong74}, he also implicitly proved that the class has Armstrong relations \cite{Fagin82}. However, there are sets of functional dependecies for which the size of  a minimal Armstrong relation is exponential in the number of variables (attributes) of the set \cite{beeri84}. Armstrong relations can still be useful in practice \cite{LANGEVELDT2010352},\cite{mannila86}.

Sometimes integrity constraints, e.g. inclusion dependencies, are considered on multi-relational databases. In this case one looks for \textit{Armstrong databases} \cite{FAGIN198313} instead of single relations. In this terminology, an Armstrong relation is simply a uni-relational Armstrong database. Not all classes of dependencies enjoy Armstrong databases: functional dependence together with unary inclusion dependence (over multi-relational data\-bases where empty relations are allowed) does not have Armstrong databases \cite{FAGIN198313}. However, standard functional dependencies (i.e. functional dependencies with a non\-empty left-side) and inclusion dependencies do have Armstrong databases. It is known that probabilistic independence has Armstrong relations \cite{geiger:1991}. We show that the class of functional dependencies, unary marginal identities, and unary marginal distribution equivalences enjoys Armstrong relations.

Instead of working with notions and conventions from database theory, we have chosen to formulate the axiomatization in \textit{team semantics} which is a semantical framework for logics. This is because the dependency notions that we consider can be viewed as certain kinds of atomic formulas in logics developed for expressing probabilistic dependencies (see the logics studied e.g. in \cite{jelia19}), and we want to keep this connection\footnote{Our axiomatization is obviously only for the atomic level of these logics.} to these logics explicit. A ``team'' in team semantics is basically a uni-relational database, so the proofs that we present could easily also be stated in terms of databases.

Team semantics was originally introduced by Hodges \cite{hodges97}. The systematic development of the framework began with \textit{dependence logic}, a logic for functional dependence introduced by V{\"a}{\"a}n{\"a}nen \cite{vaananen07}, and the setting  has turned out to be useful for formulating logics for other dependency notions as well. These logics include, e.g., inclusion logic \cite{galliani12} and (conditional) independence logic \cite{gradel10}. In team semantics, logical formulas are evaluated over sets  of assignments (called \textit{teams}) instead of single assignments as, for example, in first-order logic. This allows us to define atomic formulas that express dependencies. For example, the dependency atom $\dep(\bar{x},\bar{y})$ expresses the functional dependency stating that the values of $\bar{x}$ determine the values of $\bar{y}$. As mentioned above, a team of assignments can be thought of as a relation (or a table) in a database: each assignment corresponds to a tuple in the relation (or a row in the table).

Since we want to study functional dependencies together with probabilistic dependency notions, we turn to \textit{probabilistic team semantics} which is a generalization of the relational team semantics. A \textit{probabilistic team} is a set of assignments with an additional function that maps each assignment to some numerical value, a \textit{weight}. The function is usually a probability distribution. As probabilistic team semantics is currently defined only for discrete distributions that have a finite number of possible values for variables, we consider the implication problem only for finite teams.

Although some probabilistic dependencies might seem similar to their qualitative variants, their implication problems are different: probabilistic dependencies refer to the weights of the rows rather than the rows themselves.  Probabilistic dependencies can be tricky, especially if one considers two or more different variants together. For example, consider marginal identity together with probabilistic independence. The chase algorithm that was used for proving the completeness of the axiomatization of marginal identity in \cite{hannula2021tractability} uses inclusion dependencies that contain index variables for counting multiplicities of certain tuples. There does not seem to be a simple way of extending this procedure to also include probabilistic independencies. This is because adding a new row affects the probability distribution and often breaks existing probabilistic independencies. On the other hand, the approach that was used for probabilistic independencies in \cite{geiger:1991} cannot easily be generalized to also cover inclusion dependencies either.

In our case, we can utilize the implication problem for functional dependencies and unary inclusion dependencies which has been axiomatized in \cite{CosmadakisKV90} for both finite and unrestricted implication. The axiomatization for finite implication is proved to be complete by constructing relations with the help of multigraphs that depict the dependecies between variables. This approach has also been applied to unary functional dependence, unary inclusion dependence, and independence \cite{hannula18}. Our approach is similar, but since we are working in the probabilistic setting, our construction must take into account the two kinds of variants of unary marginal distribution equivalences.

\section{Preliminaries}

Let $D$ be a finite set of variables and $A$ a finite set of values. We usually denote variables by $x,y,z$ and values by $a,b,c$. Tuples of variables and tuples of values are denoted by $\bar{x},\bar{y},\bar{z}$ and $\bar{a},\bar{b},\bar{c}$, respectively. The notation $|\bar{x}|$ means the length of the tuple $\bar{x}$, and $\var(\bar{x})$ means the set of variables that appear in the tuple $\bar{x}$.

An assignment of values from $A$ for the set $D$ is a function $s\colon D\to A$. A team $X$ of $A$ over $D$ is a finite set of assignments $s\colon D\to A$. When the variables of $D$ are ordered in some way, e.g. $D=\{x_1,\dots,x_n\}$, we identify the assignment $s$ with the tuple $s(x_1\dots x_n)=s(x_1)\dots s(x_n)\in A^{n}$, and also explicitly call $s$ a tuple. A team $X$ over $D=\{x_1,\dots,x_n\}$ can then be viewed as a table whose columns are the variables $x_1,\dots,x_n$, and rows are the tuples $s\in X$. For any tuple of variables $\bar{x}$ from $D$, we let
\[
X(\bar{x}):=\{s(\bar{x})\in A^{|\bar{x}|}\mid s\in X\}.
\]
A probabilistic team $\X$ is a function $\X\colon X\to (0,1]$ such that $\sum_{s\in X}(s)=1$. For a tuple of variables $\bar{x}$ from $D$ and a tuple of values $\bar{a}$ from $A$, we let
\[
|\X_{\bar{x}=\bar{a}}|:=\sum_{\substack{s(\bar{x})=\bar{a} \\ s\in X}}\X(s),
\]
i.e. $|\X_{\bar{x}=\bar{a}}|$ is the marginal probability of that the variables $\bar{x}$ have the values $\bar{a}$ in the probabilistic team $\X$.

Let $\bar{x},\bar{y}$ be tuples of variables from $D$. Then $\dep(\bar{x},\bar{y})$ is a \textit{(functional) dependency atom}. If the tuples $\bar{x},\bar{y}$ are of the same length, then $\bar{x}\subseteq\bar{y}$, $\bar{x}\approx\bar{y}$, and $\bar{x}\approx^*\bar{y}$ are \textit{inclusion (dependency)}, \textit{marginal identity}, and \textit{marginal distribution equivalence atoms}, respectively. We also use the abbreviations FD (functional dependency), IND (inclusion dependency), MI (marginal identity), and MDE (marginal distribution equivalence) for the atoms. If $|\bar{x}|=|\bar{y}|=1$, an atom is called \textit{unary} and abbreviated by UFD, UIND, UMI, or UMDE.

Before defining the semantics for the atoms, we need to introduce the notion of a \textit{multiset}. A multiset is a pair $(B,m)$ where $B$ is a set, and $m\colon B\to\mathbb{N}$ is a multiplicity function. The function $m$ determines for each element $b\in B$ how many multiplicities of $b$ the multiset $(B,m)$ contains. We often denote multisets using double wave brackets, e.g., $(B,m)=\{\{0,1,1\}\}$ when $B=\{0,1\}$,  and $m$ is such that $m(0)=1$ and $m(1)=2$.

Let $\sigma$ be one of the atoms described above. The notation $\X\models\sigma$ means that a probabilistic team $\X$ satisfies $\sigma$, which is defined as follows:
\begin{itemize}
\item[(i)] $\X\models\dep(\bar{x},\bar{y})$ iff for all $s,s'\in X$, if $s(\bar{x})=s'(\bar{x})$, then $s(\bar{y})=s'(\bar{y})$.
\item[(ii)] $\X\models\bar{x}\subseteq\bar{y}$ iff for all $s\in X$, there is $s'\in X$ such that $s(\bar{x})=s'(\bar{y})$.
\item[(iii)] $\X\models\bar{x}\approx\bar{y}$ iff $|\X_{\bar{x}=\bar{a}}|=|\X_{\bar{y}=\bar{a}}|$ for all $\bar{a}\in A^{|\bar{x}|}$.
\item[(iv)] $\X\models\bar{x}\approx^*\bar{y}$ iff $\{\{|\X_{\bar{x}=\bar{a}}|\mid \bar{a}\in X(\bar{x}) \}\}=\{\{|\X_{\bar{y}=\bar{a}}|\mid \bar{a}\in X(\bar{y}) \}\}$.
\end{itemize}
An atom $\dep(\bar{x},\bar{y})$ is called a functional dependency, because $\X\models\dep(\bar{x},\bar{y})$  iff there is a function $f\colon X(\bar{x})\to X(\bar{y})$ such that  $f(s(\bar{x}))=s(\bar{y})$ for all $s\in X$. An FD of the form $\dep(\lambda,x)$, where $\lambda$ is the empty tuple, is called a constant atom and denoted by $\dep(x)$. Intuitively, the constant atom $\dep(x)$ states that variable $x$ is constant in the team. An inclusion atom $\bar{x}\subseteq\bar{y}$ says that the set of values for $\bar{x}$ is included in the set of values for $\bar{y}$, i.e., $X(\bar{x})\subseteq X(\bar{y})$. The atom $\bar{x}\approx\bar{y}$ states that the tuples $\bar{x}$ and $\bar{y}$ give rise to identical distributions. The meaning of the atom $\bar{x}\approx^*\bar{y}$ is similar but allows the marginal probabilities to be attached to different tuples of values for $\bar{x}$ and $\bar{y}$.

\begin{example}
Consider the probabilistic team $\X$ depicted in Table \ref{simple_example}. The following are some examples of atoms that are (or are not) satisfied in $\X$. Functional dependencies: $\X\models\dep(x,w)$ holds because whenever two rows agree on a value of $x$, they also agree on $w$. Also $\X\models\dep(w)$ holds since the value of $w$ is 0 on every row. However, we have $\X\not\models\dep(x,y)$ because on the first two rows  the values of $x$ are both 0, but the value of $y$ is 0 on the first row and 1 on the second.

Inclusion dependencies: $\X\models x\subseteq y$ holds because all the values that appear in the $x$ column (i.e. 0 and 1) also appear in the $y$ column. Since the only values in the $y$ column are 0 and 1, we also have $\X\models y\subseteq x$.  However $\X\not\models x\subseteq w$ because the value 1 appears in $x$ column, but not in $w$. 

Marginal identities: $\X\models x\approx z$ holds because the marginal distributions of $x$ and $z$ are identical: probability of 0 is $2/3$ and the probability of 1 is $1/3$. Since the marginal distribution for $y$ is different: probability of 0 is $1/3$ and the probability of 1 is $2/3$, we have $\X\not\models x\approx y$.

Marginal distribution equivalences: $\X\models x\approx^* y$ and $\X\models x\approx^* z$ both hold because the multiset of marginal probabilities is the same $\{\{1/3,2/3\}\}$ for all three variables $x$, $y$, and $z$. However, $\X\not\models x\approx^* w$ because $\{\{1/3,2/3\}\}\neq\{\{1\}\}$. 

Note that the example demonstrates that a marginal identity implies the corresponding marginal distribution equivalence, but not the other way around.
\end{example}
\begin{table}[h]
    \centering
    \caption{An example of a probabilistic team $\X$.}
    \begin{tabular}{ccccc}
        \toprule
        $x$ & $y$ & $z$ & $w$ & $\mathbb{X}$ \\
        \midrule
         $0$ & $0$ & $1$ & $0$ & 1/3 \\ 
		 $0$ & $1$ & $0$ & $0$ & 1/3 \\
		 $1$ & $1$ & $0$ & $0$ & 1/3 \\  
        \bottomrule
    \end{tabular}
     \label{simple_example}
\end{table}

Let $\Sigma\cup\{\sigma\}$ be a set of atoms. We write $\X\models\Sigma$ iff $\X\models\sigma'$ for all $\sigma'\in\Sigma$, and we write $\Sigma\models\sigma$ iff $\X\models\Sigma$ implies $\X\models\sigma$ for all $\X$. A decision problem of checking whether $\Sigma\models\sigma$ is called an \textit{implication problem}. We will always specify which classes of atoms are considered, e.g., if $\Sigma\cup\{\sigma\}$ is a set of FDs, UMIs, and UMDEs, the problem is called the implication problem for FDs+UMIs+UMDEs.

\section{Axiomatization for FDs+UMIs+UMDEs}\label{AxiomatizationSection}

In this section, we present an axiomatization for the implication problem of dependency, unary marginal identity, and unary marginal distribution equivalence atoms. The axiomatization is infinite and it will be shown in Section \ref{nonexistence} that there is no finite axiomatization.

\subsection{Axioms}\label{axioms}

The axioms for unary marginal identity and unary marginal distribution equivalence are the equivalence axioms of reflexivity, symmetry, and transitivity:
\begin{itemize}
\item[] UMI1: $x\approx x$
\item[] UMI2: If $x\approx y$, then $y\approx x$.
\item[] UMI3: If $x\approx y$ and $y\approx z$, then $x\approx z$.
\end{itemize}
\begin{itemize}
\item[] UMDE1: $x\approx^* x$
\item[] UMDE2: If $x\approx^* y$, then $y\approx^* x$.
\item[] UMDE3: If $x\approx^* y$ and $y\approx^* z$, then $x\approx^* z$.
\end{itemize}
For functional dependencies, we take the Armstrong axiomatization \cite{armstrong74} which consists of reflexivity, transitivity, and augmentation:
\begin{itemize}
\item[] FD1: $\dep(\bar{x},\bar{y})$ when $\var(\bar{y})\subseteq \var(\bar{x})$.
\item[] FD2: If $\dep(\bar{x},\bar{y})$ and $\dep(\bar{y},\bar{z})$, then $\dep(\bar{x},\bar{z})$.
\item[] FD3: If $\dep(\bar{x},\bar{y})$, then $\dep(\bar{x}\bar{z},\bar{y}\bar{z})$.
\end{itemize}
In Armstrong's original axiomatization, variables represent sets, whereas in this paper, we consider ordered sequences. Note that the difference is not important here because the condition $\var(\bar{y})\subseteq \var(\bar{x})$ in FD1 together with FD2 ensures that the order of variables in a tuple does not matter: any tuple $\bar{x}$ appearing in a functional dependency atom can be replaced by another tuple $\bar{x}'$ such that $\var(\bar{x})=\var(\bar{x}')$. 

Since marginal identity is a special case of marginal distribution equivalence, we have the following axiom:
\begin{itemize}
\item[] UMI \& UMDE: If  $x\approx y$, then $x\approx^* y$.
\end{itemize}
Let $k\in\{1,3,5,\dots\}$ and define function $S_k\colon\{0,1\dots,k\}\to\{0,1\dots,k\}$ as follows  
\[
S_k(i)=
\begin{cases}
i+1, \text{ when } 0\leq i<k\\
0, \text{ when } i=k.
\end{cases}
\]
For the unary functional dependencies and unary marginal distribution equivalencies, we have the \textit{k-cycle rule} for all $k\in\{1,3,5,\dots\}$:
\begin{align*}
&\text{If } \dep(x_i,x_{S_k(i)}) \text{ for all even } 0\leq i\leq k-1 \text{ and } x_j\approx^* x_{S_k(j)} \text{ for all odd } 1\leq j\leq k,\\
&\text{ then } \dep(x_{S_k(i)},x_i) \text{ and } x_i\approx^* x_{S_k(i)} \text{ for all even } 0\leq i\leq k-1
\end{align*}
The axiom is called a $k$-cycle rule because its antecedent forms a cycle in which FDs and UMDEs alternate. Note that the cycle rules are an infinite axiom schema, making our axiomatization infinite. 
The following inference rule is as a useful special case of the cycle rule
\begin{itemize}
\item[] UMDE \& FD: If $\dep(x,y)$ and $\dep(y,x)$, then $x\approx^* y$.
\end{itemize}
This can be seen by noticing that by the cycle rule
\begin{align*}
\dep(x,y) \text{ and } y\approx^* y \text{ and } \dep(y,x) \text{ and } x\approx^* x
\end{align*}
implies $x\approx^* y$. 

Axioms that allow us to infer new atoms from old ones are sometimes called ``inference rules'' in contrast to axioms that state that a certain atom always holds, e.g., UMI2 is an inference rule and UMI1 is an axiom. In this paper, this distinction is not made, instead the terms ``inference rule'' and ``axiom'' are used interchangeably.

Let $\Sigma\cup\{\sigma\}$ be a set of atoms and $\mathcal{A}$ an axiomatization, i.e., a set of axioms. Let $D$ be the set of variables that appear in $\Sigma$. We write $\Sigma\vdash_{\mathcal{A}}\sigma$ iff $\sigma$ can be derived from $\Sigma$ by using the axiomatization $\mathcal{A}$. We denote by $\cl_{\mathcal{A}}(\Sigma)$ the set of atoms obtained by closing $\Sigma$ under the axioms $\mathcal{A}$, i.e., for all $\sigma$ with variables from $D$, $\sigma\in\cl_{\mathcal{A}}(\Sigma)$ iff $\Sigma\vdash_{\mathcal{A}}\sigma$. We say that an axiomatization $\mathcal{A}$ is \textit{sound} if for any set $\Sigma\cup\{\sigma\}$ of dependencies, $\Sigma\vdash_{\mathcal{A}}\sigma$ implies $\Sigma\models\sigma$. Correspondingly, an axiomatization is \textit{complete} if for any set $\Sigma\cup\{\sigma\}$ of dependencies, $\Sigma\models\sigma$ implies $\Sigma\vdash_{\mathcal{A}}\sigma$. If the axiomatization $\mathcal{A}$ is clear from the context, we write $\Sigma\vdash\sigma$ and $\cl(\Sigma)$ instead of $\Sigma\vdash_{\mathcal{A}}\sigma$ and $\cl_{\mathcal{A}}(\Sigma)$.

The axiomatization defined above is sound. We only show that the cycle rules are sound; since it is straightforward to check the soundness of other the axioms, we leave out their proofs.

We notice that  for all $x,y\in D$, $\X\models\dep(x,y)$ implies $|X(x)|\geq |X(y)|$ and $\X\models x\approx^* y$ implies $|X(x)|= |X(y)|$. Suppose now that the antecedent of the $k$-cycle rule holds for $\X$. It then follows that 
\[
|X(x_0)|\geq|X(x_1)|=|X(x_2)|\geq\dots=|X(x_{k-1})|\geq|X(x_{k})|=|X(x_{0})|, 
\]
which implies that $|X(x_i)|=|X(x_j)|$ for all $i,j\in \{0,\dots k\}$. If $i,j$ are additionally such that $\X\models\dep(x_i,x_j)$, then there is a surjective function $f\colon X(x_i)\to X(x_j)$ for which $f(s(x_i))=s(x_j)$ for all $s\in X$. Since $X(x_i)$ and $X(x_j)$ are both finite and have the same number of elements, the function $f$ is also one-to-one. Therefore the inverse of $f$ is also a function, and we have $\X\models\dep(x_j,x_i)$, as wanted. Since $f$ is bijective, there is a one-to-one correspondence between $X(x_i)$ and $X(x_j)$. Thus $|\X_{x_i=a}|=|\X_{x_j=f(a)}|$ for all $a\in X(x_i)$, and we have $\{\{|\X_{x_i=a}|\mid a\in X(x_i) \}\}=\{\{|\X_{x_j=a}|\mid a\in X(x_j) \}\}$, which implies that $\X\models x_i\approx^* x_j$.

Note that $k$-cycle rules are not in general sound for equivalence relations. Consider the equivalence relation $\approx_{\max}$ such that $\X\models x\approx_{\max} y$ if and only if $\max\{|\X_{x=a}|\mid a\in X(x) \}=\max\{|\X_{y=a}|\mid a\in X(y) \}$. It is easy to see that $\approx_{\max}$ is indeed an equivalence relation. However, the probabilistic team $\X$ in Table \ref{counter_example} demonstrates that $\{\dep(x_0,x_1), x_1\approx_{\max} x_2, \dep(x_2,x_3), x_3\approx_{\max} x_0\}\not\models\dep(x_1,x_0)$, i.e., the $3$-cycle rule for UFDs and $\approx_{\max}$-atoms is not sound. This is because $\max\{|\X_{x_i=a}|\mid a\in X(x_i) \}=1/2$ for all $0\leq i\leq 3$, and therefore both of the atoms $x_1\approx_{\max} x_2$ and $x_3\approx_{\max} x_0$ are satisfied. Since it is easy to check that $\dep(x_0,x_1)$ and $\dep(x_2,x_3)$ are satisfied, but $\dep(x_1,x_0)$ is not, we obtain  $\X\models\{\dep(x_0,x_1), x_1\approx_{\max} x_2, \dep(x_2,x_3), x_3\approx_{\max} x_0\}$ and $\X\not\models\dep(x_1,x_0)$.

\begin{table}[h]
    \centering
    \caption{The probabilistic team $\X$.}
    \begin{tabular}{ccccc}
        \toprule
        $x_0$ & $x_1$ & $x_2$ & $x_3$ & $\mathbb{X}$ \\
        \midrule
         $0$ & $0$ & $0$ & $0$ & 1/4 \\ 
		 $1$ & $0$ & $0$ & $0$ & 1/4 \\
		 $2$ & $1$ & $1$ & $1$ & 1/4 \\
		 $2$ & $1$ & $2$ & $1$ & 1/4 \\  
        \bottomrule
    \end{tabular}
     \label{counter_example}
\end{table}

\subsection{Completeness and Armstrong relations}
\label{completeness&armstrong_rel}

In this section, we show that our axiomatization is complete and the class of FDs, UMIs and UMDEs has Armstrong relations. Recall that for any finite set $\Sigma$ of dependencies from the class, an Armstrong relation  is a probabilistic team $\X$ such that for every $\sigma$ from the class, $\X\models\sigma$ if and only if $\Sigma\models\sigma$.

We prove the completeness by showing that for any set $\Sigma$ of FDs, UMIs, and UMDEs, there is a probabilistic team $\X$ such that $\X\models\sigma$ iff $\sigma\in\cl(\Sigma)$. Note that proving this implies completeness: if $\sigma\not\in\cl(\Sigma)$ (i.e. $\Sigma\not\vdash\sigma$), then $\X\not\models\sigma$. Since $\X\models\Sigma$, we have $\Sigma\not\models\sigma$. By doing the proof this way, we obtain Armstrong relations because by completeness and soundness, $\sigma\in\cl(\Sigma)$ iff $\Sigma\models\sigma$, from which it follows that the constructed team $\X$ is an Armstrong relation for $\Sigma$. Note that a probabilistic team is actually not a relation but a probability distribution. Therefore, to be exact, instead of Armstrong relations, we should speak of \textit{Armstrong models}, which is a more general notion introduced in \cite{Fagin82}. In our setting, the Armstrong models we construct are uniform distributions over a relation, so each model is determined by a relation, and it suffices to speak of Armstrong relations. 

We consider the following variant of multigraphs with colored edges:
\begin{definition}
Let $n\in\mathbb{N}$. A mixed multigraph with colored edges is a tuple $G=(V,E_1,\dots,E_n)$, where $V$ is the set of vertices and each $E_i$, $1\leq i\leq n$, is either 
\begin{itemize}
\item[(i)] a set of ordered pairs $(x,y)\in V\times V$ (directed edge) or 
\item[(ii)] a set of subsets $\{x,y\}\subseteq V$ (undirected edge).
\end{itemize}
The members of each $E_i$ are called edges of the color $i$, and each $(V,E_i)$ is called the $i$-colored subgraph of $G$.
\end{definition}

We now define a multigraph in which different-colored edges correspond to different types of dependencies between variables. By using the properties of the graph, we can construct a suitable team, which can then be made into a probabilistic team by taking the uniform distribution over the assignments.

\begin{definition}\label{graph_def}
Let $\Sigma$ be a set of FDs, UMIs, and UMDEs. We define a multigraph $G(\Sigma)$ as follows: 
\begin{itemize}
\item[(i)] the set of vertices consists of the variables appearing in $\Sigma$,
\item[(ii)] for each marginal identity $x\approx y\in\Sigma$, there is an undirected black edge between $x$ and $y$,
\item[(iii)] for each marginal distribution equivalence $x\approx^* y\in\Sigma$, there is an undirected blue edge between $x$ and $y$, and
\item[(iv)] for each functional dependency $\dep(x,y)\in\Sigma$, there is a directed red edge from $x$ to $y$.
\end{itemize}
\end{definition}
If there are red directed edges both from $x$ to $y$ and from $y$ to $x$, they can be thought of as a single red undirected edge between $x$ and $y$.

\begin{example}\label{example}
Let $\Delta=\{\dep(x_0,x_1),x_1\approx x_2,\dep(x_2,x_3),x_3\approx^*x_0,\dep(x_4),\dep(x_5)\}$, and $\Sigma=\cl(\Delta)$. The graph $G(\Sigma)$ constructed as in Definition \ref{graph_def} is depicted Figure \ref{example_fig}.
\end{example}

  
\begin{figure}
  \centering
  \begin{tikzpicture}[node distance={15mm},thick, main/.style = {draw, circle}] 
\node[main] (0) {$x_0$}; 
\node[main] (1) [right of=0] {$x_1$};
\node[main] (2) [below of=1] {$x_2$};
\node[main] (3) [below of=0] {$x_3$};
\node[main] (4) [right of=1] {$x_4$};
\node[main] (5) [below of=4] {$x_5$};
\draw[->] [color=red] (1) -- (4);
\draw[->] [color=red] (2) -- (5);
\draw [color=red] (0) to [out=45,in=135,looseness=1] (1);
\draw [color=red] (3) to [out=-45,in=-135,looseness=1] (2);
\draw [color=red] (4) to [out=-45,in=45,looseness=1] (5);
\draw [color=blue] (0) -- (1);
\draw [color=blue] (3) -- (2);
\draw [color=blue] (0) -- (3);
\draw [color=blue] (1) -- (2);
\draw [color=blue] (4) -- (5);
\draw (1) to [out=-135,in=135,looseness=1] (2);

\end{tikzpicture} 
  \caption{The graph $G(\Sigma)$ of Example \ref{example}. For the sake of clarity, we have removed from $G(\Sigma)$ all self-loops and some edges that are implied by transitivity.}
  \label{example_fig}
  \end{figure}
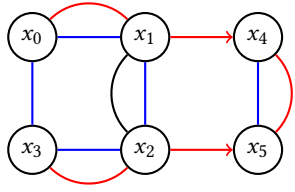

\begin{definition}[Strongly connected component]
A multigraph $G=(V,E_1,\dots,E_n)$ is \textit{strongly connected} if for all $x,y\in V$ $y$ is reachable from $x$, i.e., there is a sequence of vertices $v_1,v_2,\dots,v_m\in V$ such that $v_1=x$, $v_m=y$, and for all $1\leq j\leq m$, $(v_j,v_{j+1})\in E_i$ or $\{v_j,v_{j+1}\}\in E_i$ for some $1\leq i\leq n$. A set of vertices $S\subseteq V$ is said to be a \textit{strongly connected component} of $G$ if its induced subgraph is a maximal strongly connected subgraph of $G$.
\end{definition}

\begin{definition}[Clique]
Let $G$ a multigraph $(V,E_1,\dots,E_n)$ and $i$ be such that $E_i$ is a set of undirected edges. An \textit{i-colored clique} in $G$ is a set of vertices $C\subseteq V$ such that for all $x,y\in C$, $\{x,y\}\in E_i$.
\end{definition}

In order to use the above graph construction to show the existence of Armstrong relations, we need to consider certain properties of the graph. 
\begin{lemma}\label{graph_prop_lemma}
Let $\Sigma$ be a set of FDs, UMIs and UMDEs that is closed under the inference rules, i.e. $\cl(\Sigma)=\Sigma$. Then the graph $G(\Sigma)$ has the following properties:
\begin{itemize}
\item[(i)] Every vertex has a black, blue, and red self-loop.
\item[(ii)] The black, blue, and red subgraphs of $G(\Sigma)$ are all transitively closed. 
\item[(iii)] The black subgraph of $G(\Sigma)$ is a subgraph of the blue subgraph of $G(\Sigma)$.
\item[(iv)] The subgraphs induced by the strongly connected components of $G(\Sigma)$ are undirected. Each such component contains a black, blue, and red undirected subgraph. In the subgraph of each color, the vertices of the component can be partitioned into a collection of disjoint cliques.
All the vertices of the component belong to a single blue clique.
\item[(v)] If $\dep(\bar{x},y)\in\Sigma$ and the vertices $\bar{x}$ have a common ancestor $z$ in the red subgraph of $G(\Sigma)$, then there is a red edge from $z$ to $y$.
\end{itemize}
\end{lemma}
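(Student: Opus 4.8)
The plan is to prove each of the five properties (i)--(v) by unwinding Definition~\ref{graph_def} and appealing to the closure of $\Sigma$ under the axioms listed in Section~\ref{axioms}. The underlying principle throughout is the correspondence ``edge of color $c$ from $x$ to $y$'' $\iff$ ``the corresponding atom is in $\Sigma$'', so each graph-theoretic claim translates into a closure claim about $\Sigma$ that follows from one or two inference rules.

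\medskip
\noindent\textbf{Properties (i)--(iii).} For (i), the self-loops correspond to the atoms $x\approx x$, $x\approx^\ast x$, and $\dep(x,x)$, which are in $\Sigma$ by UMI1, UMDE1, and FD1 (with $\bar y=\bar x$), respectively. For (ii), transitive closure of the black, blue, and red subgraphs is exactly closure of $\Sigma$ under UMI3, UMDE3, and FD2 (restricted to unary tuples), respectively. For (iii), an edge $\{x,y\}$ in the black subgraph means $x\approx y\in\Sigma$, and then UMI~\&~UMDE gives $x\approx^\ast y\in\Sigma$, i.e. $\{x,y\}$ is a blue edge; note the black subgraph is undirected by construction, so this really is a subgraph inclusion. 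These three are routine.

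\medskip
\noindent\textbf{Property (iv).} This is the heart of the lemma and the main obstacle. First I would fix a strongly connected component $S$ and take any $x,y\in S$; I want to show that every edge inside $S$ (of any color) is effectively undirected, and to identify the clique structure. The key observation is that since $G(\Sigma)$'s only directed edges are red, strong connectivity of $S$ means that for any $x,y\in S$ there is a cycle through $x$ and $y$ alternating between red directed edges and undirected (blue, since by (iii) black edges are also blue) edges; after collapsing consecutive edges of the same ``type'' using transitivity (UMDE3 for blue, FD2 for red), one obtains a closed walk that, read off as atoms, is precisely the antecedent of a $k$-cycle rule for some odd $k$ (inserting trivial self-loop atoms $x\approx^\ast x$ or $\dep(x,x)$ where the strict alternation needs padding, exactly as in the derivation of UMDE~\&~FD from the cycle rule shown in the excerpt). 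The $k$-cycle rule then yields the reverse red edges and the $\approx^\ast$ atoms along the cycle; combined with transitivity this forces: (a) every red edge within $S$ has its reverse in $\Sigma$, so the red subgraph on $S$ is undirected; (b) for every $x,y\in S$, $x\approx^\ast y\in\Sigma$, so all of $S$ is a single blue clique. For the partition into disjoint red cliques: on $S$ the red relation is now symmetric (just shown), reflexive (by (i)), and transitive (by (ii)/FD2), hence an equivalence relation, and its classes are exactly maximal red cliques; the same argument works for the black subgraph (symmetric since $\approx$ is already symmetric by UMI2, reflexive by UMI1, transitive by UMI3). That the induced subgraph on $S$ is undirected then follows because its only directed edges were red and those are now matched by reverses.

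\medskip
\noindent\textbf{Property (v).} Here $\dep(\bar x,y)\in\Sigma$ and each component $x_i$ of $\bar x$ has $z$ as a red-ancestor, i.e. $\dep(z,x_i)\in\Sigma$ for each $i$ (using FD2 to compose along the red path from $z$ to $x_i$). From $\dep(z,x_i)$ for all $i$ one derives $\dep(z,\bar x)$ by repeated use of FD3 (augmentation) together with FD2: starting from $\dep(z,x_1)$, augment to get $\dep(z,zx_1)$ via $\dep(zz,x_1 z)\equiv\dep(z,x_1z)$ up to the reordering allowed by FD1+FD2 as remarked in the excerpt, then compose with $\dep(z,x_2)$ suitably augmented, and iterate; the net effect is $\dep(z,\bar x)\in\Sigma$. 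Finally FD2 with $\dep(z,\bar x)$ and $\dep(\bar x,y)$ gives $\dep(z,y)\in\Sigma$, i.e. a red edge from $z$ to $y$. The only care needed is the bookkeeping of tuple order, which the paper has already licensed via the FD1+FD2 remark, so I would invoke that rather than spell it out.

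\medskip
\noindent In summary, the proof is a sequence of ``read the edge, apply the matching axiom'' arguments; the single non-mechanical step is recognizing, in (iv), that strong connectivity plus the color constraints packages exactly into the antecedent of an odd-length cycle rule, after normalizing the alternation pattern with transitivity and trivial self-loops.
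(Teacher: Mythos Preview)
Your proposal is correct and follows essentially the same approach as the paper: the reflexivity/transitivity axioms for (i)--(ii), UMI\&UMDE for (iii), the $k$-cycle rule with self-loop padding to force undirectedness of red edges and the clique partitions in (iv), and iterated use of FD1/FD3/FD2 from the common ancestor $z$ for (v). The only cosmetic difference is in the ``single blue clique'' step of (iv): the paper first establishes that every red edge in the component is undirected and then converts an arbitrary path to a blue path via UMDE\&FD before applying UMDE3, whereas you read the needed $\approx^\ast$ atoms directly off the consequent of the cycle rule and close under UMDE3; both arguments are straightforward and equivalent.
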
 
\begin{proof}
\begin{itemize}
\item[(i)] By the reflexivity rules UMI1, UMDE1, and FD1, atoms $x\approx x$, $x\approx^* x$, and $\dep(x,x)$ are in $\Sigma$ for every vertex $x$. 
\item[(ii)] The transitivity rules UMI3, UMDE3, and FD2 ensure the transitivity of the black, blue, and red subgraphs, respectively. 
\item[(iii)] By the UMI \& UMDE rule, we have that if there is a black edge between $x$ and $y$, then there is also a blue edge between $x$ and $y$.

\item[(iv)]In a strongly connected component every black or blue edge is undirected by the definition of $G(\Sigma)$. Consider then a red edge from $x$ to $y$. Since we are in a strongly connected component, there is also a path from $y$ to $x$. By (iii), we may assume that this path only consist of blue and red edges. By adding the red edge from $x$ to $y$ to the beginning of the path, we obtain a cycle from $x$ to $x$. (Note that in this cycle, some vertices might be visited more than once.) By using the part (i), we can add blue and red self-loops, if necessary, to make sure that the cycle is constructed from alternating blue and red edges as in the cycle rule. From the corresponding $k$-cycle rule, it then follows that there is also a red edge from $y$ to $x$. 

The existence of a partition follows from the fact that in each strongly connected component the black/blue/red edges define an equivalence relation for the vertices. (The reflexivity and transitivity follow from parts (i) and (ii), and the fact that each strongly connected component is undirected implies symmetry.) Lastly, pick any vertices $x$ and $y$ in a  strongly connected component. We claim that there is a blue edge between $x$ and $y$. Since we are in a strongly connected component, there is a path from $x$ to $y$. The path consists of undirected black, blue and red edges. By (iii), each black edge of the path can be replaced with a blue one. Similarly, by the rule UMDE \& FD, each undirected red edge can be replaced with a blue one. Thus, there is a blue path from $x$ to $y$, and, by transitivity, also a blue edge.

\item[(v)] Let $\bar{x}=x_0\dots x_n$. Each vertex $x_j$, $0\leq j\leq n$, has a common ancestor $z$ in the red subgraph of $G(\Sigma)$ and the red subgraph is transitive by part (ii), so $z$ is also a direct ancestor of each $x_j$, i.e. $\dep(z,x_j)\in\Sigma$. Suppose that we have both $\dep(z, x_0\dots x_{k})$ and $\dep(z,x_{k+1})$ in $\Sigma$ for some $0\leq k\leq n-1$. By FD1 and FD3, we have $\dep(z,zz)$, $\dep(zz,x_{k+1}z)$, and $\dep(x_{k+1}z,zx_{k+1})$ in $\Sigma$. Then by using FD2 twice, we obtain $\dep(z,zx_{k+1})\in\Sigma$. Next, by FD3, we also have $\dep(zx_{k+1},x_0\dots x_{k}x_{k+1})\in\Sigma$, and then by using FD2, we obtain $\dep(z, x_0\dots x_{k+1})\in\Sigma$. Since we have $\dep(z, x_j)\in\Sigma$ for all $0\leq j\leq n$, this shows that $\dep(z,\bar{x})\in\Sigma$. By using FD2 to $\dep(z,\bar{x})\in\Sigma$ and $\dep(\bar{x},y)\in\Sigma$, we obtain $\dep(z,y)\in\Sigma$, i.e., there is a red edge from $x$ to $y$.
\end{itemize}
\end{proof}

\begin{example}\label{example2}
Consider the graph $G(\Sigma)$ of Example $\ref{example}$ depicted in Figure \ref{example_fig}. The strongly connected components of $G(\Sigma)$ are $\{x_0,x_1,x_2,x_3\}$ and $\{x_4,x_5\}$, the maximal red cliques are $\{x_0,x_1\}$, $\{x_2,x_3\}$, and $\{x_4,x_5\}$, the maximal blue cliques are $\{x_0,x_1,x_2,x_3\}$ and $\{x_4,x_5\}$ and the maximal black cliques are $\{x_0\}$, $\{x_1,x_2\}$, $\{x_3\}$, $\{x_4\}$, and $\{x_5\}$.
\end{example}

Let $S$ and $S'$ be two different strongly connected components of a graph. We say that $S'$ is a \textit{descendant component} of $S$ if for every $v\in S$ and $v'\in S'$, the vertex $v'$ is a descendant of $v$. Since this means that for every $v\in S$ and $v'\in S'$, the vertex $v$ is an ancestor of $v'$, we also say that $S$ is an \textit{ancestor component} of $S'$.

We next give a unique number to each strongly connected component of $G(\Sigma)$. The numbers are assigned such that the number of a descendant component is always greater than the number of its ancestor component. We call these numbers scc-numbers and denote by $\scc(x)$ the scc-number of the strongly connected component that the vertex $x$ belongs to. Note that if $G(\Sigma)$ has two strongly connected components without an edge between them, there exist several possible scc-numberings. In this case, we just choose one of the possible numberings as our fixed scc-numbering.

\begin{definition}\label{construction2}
Let $\Sigma$ and $G(\Sigma)$ be as in Lemma \ref{graph_prop_lemma} and assign an scc-numbering to  $G(\Sigma)$. Let $D$ be the set of the variables appearing in $\Sigma$. We define a probabilistic team $\X$ as the uniform distribution over $X$, where the team $X$ over the variables $D$ is constructed as follows:
\begin{itemize}
\item[(i)] Add a tuple of all zeroes, i.e., an assignment $s$ such that $s(x)=0$ for all $x\in D$.
\item[(ii)] Process each strongly connected component in turn, starting with the one with the smallest scc-number and then proceeding in the ascending order of the numbers. For each strongly connected component, handle each of its maximal red cliques in turn.
\begin{itemize}
\item[(a)] For each maximal red clique $k$, add a tuple with zeroes in the columns corresponding to the variables in $k$ and to the variables that are in any red clique that is a red descendant of $k$. Leave all the other positions in the tuple empty for now.
\item[(b)] Choose a variable in $k$ and count the number of zeroes in a column corresponding to the chosen variable. It suffices to consider only one variable, because the construction ensures that all the columns corresponding to variables in $k$ have the same number of zeroes. Denote this number by $\text{count}(k)$.
\end{itemize}
After adding one tuple for each maximal red clique, check that the $\text{count}(k)$ is equal for every clique $k$ in the current component and strictly greater than the count of each clique in the previous component. If it is not, repeat some of the tuples added to make it so. This can be done, because a red clique $k$ can be a red descendant of another red clique $j$ only if $j$ is in a strongly connected component with a strictly smaller scc-number than the one of the component that $k$ is in, and thus $j$'s component is already processed. Note that the counts of the cliques in a component do not change after the component has been processed.
\item[(iii)] The last component is a single red clique consisting of those variables $x$ for which $\dep(x)\in\Sigma$, if any. Each variable in this clique functionally depends on all the other variables in the graph, so we do not leave any empty positions in its column. Therefore the columns corresponding to these variables contain only zeroes. If there are no variables $x$ for which $\dep(x)\in\Sigma$, we finish processing the last component by adding one tuple with all positions empty.
\item[(iv)] After all strongly connected components have been processed, we start filling the empty positions. Process again each strongly connected component in turn, starting with the one with the smallest scc-number. For each strongly connected component, count the number of maximal black cliques. If there are $n$ such cliques, number them from 0 to $n-1$. Then handle each maximal black clique $k$, $0\leq k\leq n-1$ in turn.
\begin{itemize}
\item[(a)] For each column in clique $k$, count the number of empty positions. If the column has $d>0$ empty positions, fill them with numbers $1,\dots d-1,d+k$ without repetitions. (Note that each column in $k$ has the same number of empty positions.) 
\end{itemize}
\item[(v)] If there are variables $x$ for which $\dep(x)\in\Sigma$, they are all in the last component. The corresponding columns contain only zeroes and have no empty positions. As before, count the number of maximal black cliques. If there are $n$ such cliques, number them from 0 to $n-1$. Then handle each maximal black clique $k$, $0\leq k\leq n-1$ in turn.
\begin{itemize}
\item[(a)] For each column in clique $k$, change all the zeroes into $k$'s. 
\end{itemize}

After handling all the components (including the one consisting of constant columns, if there are any), there are no empty positions anymore, and the construction is finished.
\end{itemize}
\end{definition}

The zero tuple is added in step (i) to make sure that any variable $x$ determines functionally only those variables that appear in the same maximal red clique as $x$ or in some its red descendant cliques. This can be achieved by adding in step (ii), for each strongly connected component, new tuples where zeroes appear in positions according to the maximal red cliques. This is described in more detail in the proof of Lemma \ref{basic_lem2} (i) \& (ii).

The purpose of counting in (iib) is to ensure that the columns for variables in the same strongly connected component have the same number of zeroes. Since every pair of variables in the same strongly connected component has an UMDE between them, the same number of zeroes ensures that when we later fill in the empty positions without repeating values, these UMDEs will hold in the team. Note that by this counting, we also ensure that columns that correspond to variables in different strongly connected components have different numbers of zeroes. This ensures that only the inferable UMDEs are satisfied.

When we fill in the empty positions in step (iv), the chosen values are determined by the maximal black cliques, so only the inferable UMIs hold. Note that if there are constant variables, their values will also be renamed in step (v) for the same reason. 

\begin{example}\label{example3}
Let $\Sigma$ be as in Example \ref{example2} and the probabilistic team $\X$ as in Definition \ref{construction2}. The team $\X$ is depicted in Table \ref{example_table}.
\end{example}

\begin{table}[h]
    \centering
    \caption{The probabilistic team $\X$ of Example \ref{example3} constructed as in Definition \ref{construction2}.}
    \begin{tabular}{ccccccc}
        \toprule
        $x_0$ & $x_1$ & $x_2$ & $x_3$ &$x_4$ & 				$x_5$ & $\mathbb{X}$ \\
        \midrule
         $0$ & $0$ & $0$ & $0$ & $0$ & $1$ & 1/4 \\ 
		$0$ & $0$ & $1$ & $1$ & $0$ & $1$ & 1/4 \\
		 $1$ & $1$ & $0$ & $0$ & $0$ & $1$ & 1/4 \\
 		$2$ & $3$ & $3$ & $4$ & $0$ & $1$ & 1/4 \\ 
        \bottomrule
    \end{tabular}
     \label{example_table}
\end{table}

The next lemma shows that the team we constructed above already has many of the wanted properties:
\begin{lemma}\label{basic_lem2}
Let $\Sigma$, $G(\Sigma)$, and $\X$ be as in Definition \ref{construction2}. Then the following statements hold:
\begin{itemize}
\item[(i)] For any nonunary functional dependency $\sigma$, if $\sigma\in\Sigma$, then $\X\models\sigma$.
\item[(ii)] For any unary functional dependency or constancy atom $\sigma$, $\X\models\sigma$ if and only if $\sigma\in\Sigma$.
\item[(iii)] For any unary marginal distribution equivalence atom $\sigma$, $\X\models\sigma$ if and only if $\sigma\in\Sigma$.
\item[(iv)] For any unary marginal identity atom $\sigma$, $\X\models\sigma$ if and only if $\sigma\in\Sigma$.
\end{itemize}
\end{lemma}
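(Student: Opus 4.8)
The plan is to verify the four statements by carefully tracking how the construction in Definition \ref{construction2} produces the columns, using the graph properties from Lemma \ref{graph_prop_lemma}. Throughout, the key structural observations are: (a) a column for a variable $x$ has zeroes exactly in the rows added for maximal red cliques that are red-ancestors of (or equal to) the clique of $x$, plus the initial zero tuple and possible repetitions; (b) the non-zero entries of a column are chosen without repetition, except that two columns $x,y$ in the same maximal black clique get the \emph{same} pattern of fill-in values (the values $1,\dots,d-1,d+k$) and hence are identically distributed; (c) two columns in the same strongly connected component have the same number of zeroes (by the counting in step (ii.b)) while columns in different components have different zero-counts (by the "strictly greater" requirement).

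For (i), I would argue that if $\dep(\bar{x},y)\in\Sigma$ is nonunary, then by Lemma \ref{graph_prop_lemma}(v) and the fact that $\Sigma$ is closed, whenever two rows $s,s'$ agree on all of $\bar{x}$ they must agree on $y$: the only way two rows can agree on a tuple $\bar x$ of length $\geq 2$ is if both rows have zeroes there (since non-zero fill-in values are assigned without repetition within a column, and distinct columns in a black clique share values but then the rows still differ unless they are the all-relevant-zero rows), and one shows that the rows where $\bar x$ is all zero are exactly rows where $y$ is forced to be zero too, using that the red cliques of the $\bar x_i$ have a common ancestor clique that (via (v)) also red-reaches $y$. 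For (ii), the forward-hard direction is: if $\X\models\dep(x,y)$ then $\dep(x,y)\in\Sigma$; here I use that the zero-rows for $x$'s red clique force $y$ to zero only if $y$ is a red-descendant of $x$, so a functional dependency in the team forces $y$ into $x$'s red-descendant set, i.e. $\dep(x,y)\in\Sigma$; the constancy case $\dep(x)$ is the special instance where $x$ is in the last component, whose columns are constant, and conversely a constant column in the construction only arises there. The converse direction of (ii) follows from (i) together with reflexivity/how the last component is built.

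For (iii), $\X\models x\approx^* y$ amounts to the multisets of nonzero marginal probabilities (all equal to $1/N$ where $N=|X|$) together with the zero-count-determined multiplicities matching; since the distribution is uniform, $x\approx^* y$ holds iff $x$ and $y$ have the same number of zeroes and the same number of distinct nonzero values with the same multiplicities — and by the construction this happens iff $x$ and $y$ lie in the same strongly connected component, i.e. iff $x\approx^* y\in\Sigma$ (using Lemma \ref{graph_prop_lemma}(iv) that an scc is exactly a blue clique and blue edges are the UMDEs, and the "strictly greater count" step to separate distinct components). For (iv), $\X\models x\approx y$ requires the actual value-indexed marginal probabilities to agree; since all nonzero weights are $1/N$ and values are assigned without repetition except that a whole black clique shares the pattern $1,\dots,d-1,d+k$, two columns are identically distributed iff they have the same zero-count \emph{and} the same black-clique offset $k$, which by step (iv)/(v) of the construction happens iff they are in the same maximal black clique, i.e. iff $x\approx y\in\Sigma$ (using Lemma \ref{graph_prop_lemma}(iii) and the black-clique partition from (iv)).

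The main obstacle I anticipate is (i) together with the hard direction of (ii): making rigorous the claim that "two rows agree on a tuple $\bar x$ iff both are in the common zero-region of the $\bar x_i$'s red cliques, and that region forces $y$ to zero." This requires a careful bookkeeping of which rows receive zeroes in which columns — tracking the order in which strongly connected components and their maximal red cliques are processed, the descendant-closure in step (ii.a), and the possible repetitions inserted to equalize counts — and then invoking Lemma \ref{graph_prop_lemma}(v) at exactly the right moment. The probabilistic parts (iii) and (iv) are comparatively routine once the combinatorial picture of the columns is pinned down, since uniformity reduces everything to counting zeroes and counting shared fill-in patterns.
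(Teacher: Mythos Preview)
Your proposal is correct and follows essentially the same approach as the paper: for (i) you use that the only repeated value in a non-constant column is $0$, so two distinct rows agreeing on $\bar x$ must both be $\bar 0$ there, and then a common red ancestor of the $\bar x_i$'s together with Lemma~\ref{graph_prop_lemma}(v) forces $y_0$ to be $0$ as well; (ii) reduces the $\sigma\in\Sigma$ direction to (i) and handles $\sigma\notin\Sigma$ via the all-zero tuple and the tuple added for $x$'s red clique; (iii) and (iv) reduce, under the uniform distribution, to $|X(x)|=|X(y)|$ and $X(x)=X(y)$ respectively, which the counting and fill-in steps translate into membership in the same strongly connected component (blue clique) and same maximal black clique. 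One small clean-up: your parenthetical in (i) about ``distinct columns in a black clique share values'' is a red herring---what matters is only that within a \emph{single} non-constant column the sole repeated value is $0$, so any two distinct rows agreeing on some $x_i$ must both be $0$ there.
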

\begin{proof}
\begin{enumerate}
\item[(i)] Suppose that $\sigma=\dep(\bar{x},\bar{y})\in\Sigma$. We may assume that $\bar{x}$,$\bar{y}$ do not contain variables $z$ for which $\dep(z)\in\Sigma$. This is because for any such variable $z$, $\X\models\dep(\bar{x},\bar{y})$ implies $\X\models\dep(\bar{x}z,\bar{y})$ and $\X\models\dep(\bar{x},\bar{y}z)$. By FD1, we have $\dep(\bar{x}z,\bar{x})$, which together with $\dep(\bar{x},\bar{y})$ implies $\dep(\bar{x}z,\bar{y})$ by FD2. From $\dep(z)$, we obtain $\dep(\bar{y},\bar{y}z)$ by FD3. Then by applying FD2 to $\dep(\bar{x},\bar{y})$ and $\dep(\bar{y},\bar{y}z)$, we obtain $\dep(\bar{x},\bar{y}z)$.

For a contradiction suppose that $\X\not\models\sigma$. Then there are tuples $s,s'$ that violate $\sigma$. By our assumption, the only repeated number in each column of variables $\bar{x}$,$\bar{y}$ is 0. Thus $s(\bar{x})=s'(\bar{x})=\bar{0}$, and either $s(\bar{y})\neq\bar{0}$ or $s'(\bar{y})\neq\bar{0}$. Assume that $s(\bar{y})\neq\bar{0}$. Then there is $y_0\in\var(\bar{y})$ such that $s(y_0)\neq 0$. By our construction, tuple $s$ corresponds to a red clique. This means that in $s$, each variable that is 0, is functionally determined by every variable of the corresponding red clique. Hence, there is a variable $z$ such that $s(z)=0$ and $z$ is a common ancestor of vertices $\bar{x}$ in the red subgraph. As $\dep(\bar{x},\bar{y})\in\Sigma$ and, by FD1, $\dep(\bar{y},y_0)\in\Sigma$, we obtain $\dep(\bar{x},y_0)\in\Sigma$ by using FD2. Then by Lemma \ref{graph_prop_lemma} (v), we have $\dep(z,y_0)\in\Sigma$, and thus, by the construction, $s(y_0)=0$, which is a contradiction.

\item[(ii)] Suppose that $\sigma=\dep(x)$. If $\sigma\in\Sigma$, then $x$ is determined by all the other variables in $\Sigma$. This means that $x$ is in the last component (the one with the highest scc-number), and thus the only value appearing in column $x$ is $k$, where $k$ is the number assigned to the maximal black clique of $x$. Hence, we have $\X\models\sigma$. If $\sigma\not\in\Sigma$, then either $x$ is not in the last component or there are no variables $y$ for which $\dep(y)\in\Sigma$. In either case, we have at some point added a tuple in which the position of $x$ is empty, and therefore there are more than one value appearing in column $x$, i.e., $\X\not\models\sigma$.

Suppose then that $\sigma=\dep(x,y)$. We may assume that neither $\dep(x)$ nor $\dep(y)$ are in $\Sigma$. If $\dep(y)\in\Sigma$, then $y$ is constant, and we have $\sigma\in\Sigma$ and $\X\models\sigma$. If $\dep(x)\in\Sigma$ and $\dep(y)\not\in\Sigma$, then $\sigma\not\in\Sigma$ and $\X\not\models\sigma$. 
If $\sigma\in\Sigma$, we are done since this is a special case of (i). If $\sigma\not\in\Sigma$, then the first tuple and the tuple that was added for the red clique of $x$ violate $\sigma$.

\item[(iii)] Suppose that $\sigma=x\approx^* y$. Since $\X$ is obtained by taking the uniform distribution over $X$, $\X\models x\approx^* y$ holds if and only if $|X(x)|=|X(y)|$. The latter happens if and only if $x$ and $y$ are in the same strongly connected component. Since any strongly connected component is itself the maximal blue clique, this is equivalent to $x$ and $y$ being in the same maximal blue clique in a strongly connected component, i.e., $x\approx^* y\in\Sigma$.

\item[(iv)] Suppose that $\sigma=x\approx y$. Since $\X$ is obtained by taking the uniform distribution over $X$, $\X\models x\approx y$ holds if and only if $X(x)=X(y)$. The latter happens if and only if $x$ and $y$ are in the same maximal black clique in a strongly connected component. This happens if and only if $x\approx y\in\Sigma$.
\end{enumerate}

\end{proof}

The construction of Definition \ref{construction2} does not yet give us Armstrong relations. This is because there might be nonunary functional dependencies $\sigma$ such that $\X\models\sigma$ even though $\sigma\not\in\Sigma$, as demonstrated in the following example. 

\begin{example}\label{example4}
Let $\Delta=\{x_0\approx^*x_1,x_1\approx^*x_2\}$ and $\Sigma=\cl(\Delta)$. The graph $G(\Sigma)$ and probabilistic team $\X$ for $G(\Sigma)$ constructed as in Definition \ref{construction2} are depicted in Figure \ref{ex4pic} and Table \ref{ex4table}, respectively. Note that, e.g., $\X\models\dep(x_0x_1,x_2)$, but $\dep(x_0x_1,x_2)\not\in\Sigma$.
\end{example}


\begin{figure}
  
  \centering
  \begin{tikzpicture}[node distance={15mm},thick, main/.style = {draw, circle}] 
\node[main] (0) {$x_0$}; 
\node[main] (1) [right of=0] {$x_1$};
\node[main] (2) [below of=1] {$x_2$};
\draw [color=blue] (0) -- (1);
\draw [color=blue] (0) -- (2);
\draw [color=blue] (1) -- (2);
\end{tikzpicture}
  \caption{The graph $G(\Sigma)$ of Example \ref{example4}.}
  \label{ex4pic}
\end{figure}
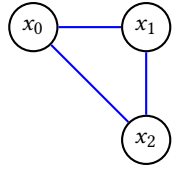

\begin{table}[h]
    \centering
    \caption{The probabilistic team $\X$ of Example \ref{example4} constructed as in Definition \ref{construction2}.}
\centering
  \begin{tabular}{ccccccc} 
 \toprule
 $x_0$ & $x_1$ & $x_2$ & $\mathbb{X}$ \\
 \midrule
 $0$ & $0$ & $0$ & 1/4 \\ 
 $0$ & $1$ & $1$ & 1/4 \\
 $1$ & $0$ & $4$ & 1/4 \\
 $2$ & $3$ & $0$ & 1/4 \\   
 \bottomrule
\end{tabular}
  \label{ex4table}
\end{table}

In the lemma below, we show how to construct a probabilistic team for which $\X\models\sigma$ implies $\sigma\in\Sigma$ also for nonunary functional dependencies $\sigma$.

\begin{lemma}\label{FD_lemma2}
Let $\Sigma$ and $G(\Sigma)$ be as in Lemma \ref{graph_prop_lemma}. Then there exists a probabilistic team $\X$ such that $\X\models\sigma$ if and only if $\sigma\in\Sigma$.
\end{lemma}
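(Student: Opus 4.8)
The plan is to start from the probabilistic team $\X_0$ produced by Definition~\ref{construction2} (with underlying team $X_0$) and to \emph{augment} it with small witness structures that kill the unwanted nonunary functional dependencies, while arranging the augmentation so that it does not accidentally validate any atom that $\X_0$ already refutes. By Lemma~\ref{basic_lem2}, the team $\X_0$ satisfies every FD of $\Sigma$ and decides every unary FD, constancy atom, UMI, and UMDE correctly, so in particular $\X_0\models\Sigma$; the only obstruction to $\X_0$ being the team we want is that $\X_0\models\dep(\bar x,y)$ may hold for some nonunary FD $\dep(\bar x,y)\notin\Sigma$, as in Example~\ref{example4}. Note that, up to reordering and duplication of the variables on its left side (which are immaterial by FD1 and FD2), there are only finitely many FDs over the variable set $D$, hence only finitely many such "bad" ones.

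For each bad nonunary FD $\sigma=\dep(\bar x,y)\notin\Sigma$ I build a two‑tuple \emph{witness gadget}. Put $C_\sigma:=\{z\mid\dep(\bar x,z)\in\Sigma\}$, the FD‑closure of $\var(\bar x)$ inside $\Sigma$; combining the atoms $\dep(\bar x,z)$ by the Armstrong rules as in the proof of Lemma~\ref{basic_lem2}(i), one gets $\var(\bar x)\subseteq C_\sigma$, $y\notin C_\sigma$, every constant variable (every $z$ with $\dep(z)\in\Sigma$) lies in $C_\sigma$, and $C_\sigma$ is \emph{closed}: if $\dep(V,W)\in\Sigma$ and $V\subseteq C_\sigma$ then $W\subseteq C_\sigma$. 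Let $T_\sigma=\{t,t'\}$ be the team with $t(z)=0$ for all $z$, and $t'(z)=0$ for $z\in C_\sigma$, $t'(z)=1$ for $z\notin C_\sigma$, so that $t$ and $t'$ agree exactly on the columns in $C_\sigma$. A direct check shows that the uniform distribution on $T_\sigma$ satisfies every atom of $\Sigma$ but violates $\sigma$: closedness handles the FDs, constancies hold because constant variables lie in $C_\sigma$, and for a UMI or UMDE in $\Sigma$ the variables are forced into one strongly connected component (an undirected black or blue edge makes each endpoint reachable from the other), hence $x\in C_\sigma\Leftrightarrow y\in C_\sigma$ and the two columns have the same shape in $T_\sigma$; meanwhile $t,t'$ agree on $\bar x$ but differ on $y$.

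The target team is then assembled as follows. Fix $N$ strictly larger than the number of gadgets, and let $X$ be the disjoint union of $N$ renamed copies of $X_0$ together with one renamed copy of each $T_\sigma$, where every renaming acts column‑wise, is the identity on all constant columns, renames any two columns lying in a common maximal black clique of $G(\Sigma)$ in the same way (this is consistent because constant variables sharing a black clique already have the same value in $X_0$ by the argument behind Lemma~\ref{basic_lem2}(iv)), and otherwise uses pairwise disjoint fresh value ranges across distinct black cliques and distinct copies. Let $\X$ be the uniform distribution on $X$. First, $\X\models\Sigma$: across components the only tuples that can agree are those agreeing on constant columns alone, and an FD of $\Sigma$ with left side among the constants concludes only constants, so no FD of $\Sigma$ is broken; and since each copy and each gadget satisfies every constancy, UMI and UMDE of $\Sigma$ with the renamings chosen compatibly, and the non‑constant columns are value‑disjoint across components, these atoms are inherited because the induced counts add coordinatewise. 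Second, every nonunary FD outside $\Sigma$ is one of the $\sigma$'s and is refuted by its gadget; every unary FD or constancy atom outside $\Sigma$ is already refuted inside one renamed copy of $X_0$ by Lemma~\ref{basic_lem2}(ii); a UMI $x\approx y\notin\Sigma$ puts $x,y$ in different maximal black cliques (Lemma~\ref{basic_lem2}(iv)), whence $X(x)\ne X(y)$ and $\X\not\models x\approx y$; and for a UMDE $x\approx^{*}y\notin\Sigma$ we have $|X_0(x)|\ne|X_0(y)|$ by Lemma~\ref{basic_lem2}(iii), so $N$ copies contribute a gap of at least $N$ to $\big||X(x)|-|X(y)|\big|$ while the $T_\sigma$'s together perturb this gap by at most the number of gadgets, and the choice of $N$ forces $|X(x)|\ne|X(y)|$, i.e.\ $\X\not\models x\approx^{*}y$.

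The step I expect to be the main obstacle is exactly this last one. The correctness of $\X_0$ for UMDEs hinges on the number of distinct values appearing in a column, and the gadgets inevitably change that number, so one cannot simply glue the gadgets onto $\X_0$; blowing $X_0$ up by a factor exceeding the total gadget perturbation is what makes the UMDE refutations robust. The accompanying bookkeeping — forcing the renamings to be constant on maximal black cliques and to fix the constant columns, and checking that no cross‑component coincidences break an FD of $\Sigma$ — is what simultaneously preserves the FDs, constancies and UMIs of $\Sigma$ and leaves every unary refutation of $\X_0$ intact.
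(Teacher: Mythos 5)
There is a genuine gap, and it sits exactly where your construction diverges from the paper's. Your witness gadget $T_\sigma$ for a bad FD $\sigma=\dep(\bar x,y)$ assigns $0$ to the columns in the FD-closure $C_\sigma=\{z\mid\dep(\bar x,z)\in\Sigma\}$ and $1$ elsewhere, and you justify that $T_\sigma\models\Sigma$ by claiming that for any UMI or UMDE of $\Sigma$ between $x$ and $y$ one has $x\in C_\sigma\Leftrightarrow y\in C_\sigma$ because $x$ and $y$ lie in a common strongly connected component. That implication is false: membership in a strongly connected component is closed under black and blue edges, but $C_\sigma$ is only closed under red (FD) edges, and a blue edge does not transfer membership in $C_\sigma$. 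The paper's own Example~\ref{example4} is a counterexample: for $\Sigma=\cl(\{x_0\approx^*x_1,x_1\approx^*x_2\})$ and $\sigma=\dep(x_0x_1,x_2)\notin\Sigma$ we get $C_\sigma=\{x_0,x_1\}$, yet $x_0\approx^*x_2\in\Sigma$. Your gadget then gives column $x_0$ the values $(0,0)$ and column $x_2$ the values $(0,1)$, whose multisets of value-counts are $\{\{2\}\}$ and $\{\{1,1\}\}$ respectively. Since your renamings keep value ranges disjoint across components, this discrepancy is not cancelled by the $N$ copies of $X_0$ (which contribute equal multisets for $x_0$ and $x_2$), so the assembled team violates $x_0\approx^*x_2\in\Sigma$ and hence fails $\X\models\Sigma$. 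The same problem occurs for UMIs: with $\Sigma=\cl(\{x\approx y\})$ over $\{x,y,z\}$ and $\sigma=\dep(xz,y)$, we have $x\in C_\sigma$ but $y\notin C_\sigma$, so the gadget breaks $x\approx y$. (A smaller, fixable slip: taking the renaming to be the identity on constant columns makes the gadget's $0$ clash with the value $k$ that step (v) of Definition~\ref{construction2} writes into those columns, breaking $\dep(z)\in\Sigma$.) Your $N$-fold blow-up correctly protects the \emph{refutation} of UMDEs outside $\Sigma$, but the construction fails on the \emph{satisfaction} side.

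The paper's proof dodges precisely this issue: its step (0) adds, for each bad left-hand side, a tuple with zeroes on the FD-closure and the remaining positions left \emph{empty}, and then reruns the original construction, whose step (ii) explicitly rebalances the zero counts (``repeat some of the tuples added to make it so'') so that all columns in a strongly connected component end up with equally many zeroes, and whose steps (iv)--(v) fill the empty positions with fresh values chosen per maximal black clique. It is this rebalancing-then-filling that restores the equal count-multisets within blue cliques (for UMDEs) and equal value sets within black cliques (for UMIs) after the witness tuples have disturbed them. To repair your argument you would need an analogous mechanism — e.g.\ padding each gadget (or the whole union) with extra rows so that, within every strongly connected component, all columns regain the same number of repeated values and, within every black clique, the same value sets — rather than relying on $C_\sigma$ respecting the clique structure, which it does not.
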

\begin{proof}
Let $D$ be the set of the variables appearing in $\Sigma$. We again begin by constructing a relational team $X$ over $D$, but this time we modify the construction from Definition \ref{construction2} by adding a new step (0), which will be processed first, before continuing with step (i). From step (i), we will proceed as in Definition \ref{construction2}. Let $\Sigma'$ be a set of all nonunary functional dependencies $\dep(\bar{x},\bar{y})\not\in\Sigma$ where $\bar{x},\bar{y}\in D$.
The step (0) is then defined as follows:
\begin{itemize}
\item[(0)] For each $C\subseteq D$ with $|C|\geq 2$, check if there is $\dep(\bar{x},\bar{y})\in\Sigma'$ such that $\var(\bar{x})=C$. If yes, add a tuple with zeroes in exactly those positions $z$ that are functionally determined by $\bar{x}$, i.e., those $z$ for which $\dep(\bar{x},z)\in\Sigma$. Leave all the other positions in the tuple empty for now.
\end{itemize}
As before, we define the probabilistic team $\X$ as the uniform distribution over $X$. We first show that $\X$ violates all atoms from $\Sigma'$. Let $\sigma=\dep(\bar{x},\bar{y})\in\Sigma'$. We may assume that $\bar{x}$,$\bar{y}$ do not contain variables $z$ for which $\dep(z)\in\Sigma$ because $\X\not\models\dep(\bar{x},\bar{y})$ implies $\X\not\models\dep(\bar{x}z,\bar{y})$ and $\X\not\models\dep(\bar{x},\bar{y}z)$ for any such variable $z$. Since $\sigma\not\in\Sigma$, there is $y_0\in\var(\bar{y})$ such that $y_0$ is not determined by $\bar{x}$, i.e., $\dep(\bar{x},y_0)\not\in\Sigma$. Thus, the tuple that was added for the set $C=\var(\bar{x})$ in step (0), and the tuple added in step (i), agree on variables $\bar{x}$ (they are all zeroes), but in the first tuple, variable $y_0$ is nonzero and in the other tuple, $y_0$ is zero. Hence, we have $\X\not\models\sigma$.

We still need to show that $\X\models\Sigma$, and that $\sigma\not\in\Sigma$ implies $\X\not\models\sigma$, when $\sigma$ is a UMI,  UMDE, UFD or a constancy atom. Note that since the construction proceeds from step (i) as in Definition \ref{construction2}, it ensures that the counts of the empty positions are the same for all the columns that correspond to the variables in the same strongly connected component. Therefore, for any unary marginal identity or distribution equivalence atom $\sigma$, $\X\models\sigma$ if and only if $\sigma\in\Sigma$ as shown in Lemma \ref{basic_lem2}, parts (iii) \& (iv). For a unary functional dependency, or a constancy atom $\sigma\not\in\Sigma$, we obtain, by Lemma \ref{basic_lem2}, part (ii), that already the part of the team $\X$ that was constructed as in Definition \ref{construction2} contains tuples that violate $\sigma$.

 If $\dep(z)\in\Sigma$, then clearly $\dep(\bar{x},z)\in\Sigma$ for any $\bar{x}$, and thus adding the tuples in step (0) does not end up violating $\dep(z)$. Suppose then that $\dep(\bar{z},\bar{z}')\in\Sigma$, where $\bar{z}$ and $\bar{z}'$ can also be single variables. For a contradiction, suppose there are two tuples that together violate $\dep(\bar{z},\bar{z}')$. We may again assume that $\bar{z}$,$\bar{z}'$ do not contain variables $u$ for which $\dep(u)\in\Sigma$. By our assumption, the only repeated numbers in columns $\bar{z}$ are zeroes, so it has to be the case that the variables $\bar{z}$ are all zeroes in the two tuples that violate $\dep(\bar{z},\bar{z}')$. We now show that in any such tuple, the variables $\bar{z}'$ are also all zeroes, contradicting the assumption that there are two tuples that agree on $\bar{z}$ but not on $\bar{z}'$.  
 
Suppose that the tuple was added on step (0). By the construction, we then have some variables $\bar{x}$,$\bar{y}$ such that $|\bar{x}|\geq 2$, $\dep(\bar{x},\bar{y})\in\Sigma'$, and $\dep(\bar{x},\bar{z})\in\Sigma$. Since $\dep(\bar{z},\bar{z}')\in\Sigma$, by transitivity, we have $\dep(\bar{x},\bar{z}')\in\Sigma$, and thus the variables $\bar{z}'$ are also all zeroes in the tuple, as wanted. Suppose then that the tuple was not added in step (0), i.e., it was added for some maximal red clique in a strongly connected component. Suppose then that there is $z_0\in\var(\bar{z}')$ which is not zero in the tuple. This leads to a contradiction as shown in Lemma \ref{basic_lem2} part (i). Hence, the tuple has to have all zeroes for variables $\bar{z}'$.
\end{proof}

Note that the size of the constructed Armstrong relations may be large with respect to the number of variables $|D|:=n$ because in the construction of Lemma \ref{FD_lemma2}, we may need to add up to $2^{n}-(n+2)$ new tuples to ensure that the nonunary FDs that are not implied by $\Sigma$ are not satisfied.

\begin{example}
Let $\Sigma$  and $G(\Sigma)$ be as in Example \ref{example4}. Table \ref{ex4table2} depicts the probabilistic team $\X$ constructed as in Lemma \ref{FD_lemma2}. Note that now for all functional dependencies $\sigma$, $\X\models\sigma$ if and only if $\sigma\in\Sigma$.
\end{example}
\begin{table}
\caption{The probabilistic team $\X$ of Example \ref{example4} constructed as in Lemma \ref{FD_lemma2}.}
\begin{tabular}{ ccccccc } 
 \toprule
 $x_0$ & $x_1$ & $x_2$ & $\mathbb{X}$ \\
 \midrule 
 $0$ & $0$ & $1$ & 1/7 \\
 $0$ & $1$ & $0$ & 1/7 \\
 $1$ & $0$ & $0$ & 1/7 \\
 $0$ & $0$ & $0$ & 1/7 \\ 
 $0$ & $2$ & $2$ & 1/7 \\
 $2$ & $0$ & $5$ & 1/7 \\
 $3$ & $4$ & $0$ & 1/7 \\   
 \bottomrule
\end{tabular}
\label{ex4table2}
\end{table}

\subsection{Non-existence of a complete $k$-ary axiomatization}\label{nonexistence}

An axiom whose antecedent contains $k$ atoms, i.e., an axiom of the form 
\[
(\sigma_1\text{ and }\dots\text{ and }\sigma_k)\implies(\delta_1\text{ and }\dots\text{ and }\delta_l)\]
is called $k$-ary. We say that an axiomatization is $k$-ary if each of its axioms is at most $k$-kary. In this section, we show that for any $k$, there is no complete $k$-ary axiomatization for the implication problem for FDs+UMIs+UMDEs, which means that any complete axiomatization has to be infinite. The approach is similar to the one used in \cite{HANNULA16} in the case of finite implication problem for (relational) independence atoms and keys.

Fix $k\in\{3,5,\dots\}$ and consider the set
\[
\Sigma_k=\{\dep(x_0,x_1),x_1\approx^* x_2,\dep(x_2,x_3),\dots,x_{k-2}\approx^*x_{k-1},\dep(x_{k-1},x_k),x_k\approx^* x_0\},
\]
which consists of the atoms from the antecedent of $k$-cycle rule. Denote $\bar{x}=x_0\dots x_k$, and define the following set:
\begin{align*}
\Sigma=\Sigma_k&\cup\{x_0\approx x_0,x_1\approx x_1,\dots,x_k\approx x_k\}\\
&\cup\{x_0\approx^* x_0,x_1\approx^* x_1,\dots,x_k\approx^* x_k\}\\
&\cup\{x_0\approx^* x_k,x_{k-1}\approx^* x_{k-2},\dots,x_{4}\approx^* x_{3},x_{2}\approx^* x_{1}\}\\
&\cup\Sigma',
\end{align*}
where $\Sigma'$ is defined inductively as follows: 
\begin{itemize}
\item[(i)] If $\var(\bar{z})\subseteq\var(\bar{y})\subseteq\var(\bar{x})$, then $\dep(\bar{y},\bar{z})\in\Sigma'$.
\item[(ii)] If $\dep(\bar{y},\bar{z})\in\Sigma'$ and $i\in\{0,2,\dots,k-1\}$, then $\dep(\bar{y}x_i,\bar{z}x_{i+1})\in\Sigma'$.
\item[(iii)] If $\dep(\bar{y},\bar{z})\in\Sigma'$, $\var(\bar{y})=\var(\bar{y}')$, and $\var(\bar{z})=\var(\bar{z}')$, then $\dep(\bar{y}',\bar{z}')\in\Sigma'$.
\end{itemize}
Let $\cl_k(\Sigma_k)$ be the closure of $\Sigma_k$ under $k$-ary implication, i.e., $\sigma\in\cl_k(\Sigma_k)$ iff $\Sigma_k\vdash_{\mathcal{A}}\sigma$, where $\mathcal{A}$ is the set of all sound axioms that are at most $k$-ary.

We claim that $\Sigma=\cl_k(\Sigma_k)$. By the $k$-cycle rule (which is a $k+1$-ary axiom) and the completeness, we have $\Sigma_k\models\dep(x_1,x_0)$. By the definition of $\Sigma'$, we have $\dep(x_1,x_0)\not\in\Sigma'$. Since $\dep(x_1,x_0)\not\in\Sigma_k$, we have $\dep(x_1,x_0)\not\in\Sigma$. Hence, from $\Sigma=\cl_k(\Sigma_k)$, it follows that there is no complete $k$-ary axiomatization for the FD+UMI+UMDE implication.

For $\Sigma=\cl_k(\Sigma_k)$, we prove the following two lemmas:
\begin{lemma}\label{k-axiom_lemma1}
$\Sigma\subseteq\cl_k(\Sigma_k)$.
\end{lemma}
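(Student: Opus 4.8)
The plan is to show that every atom listed in the definition of $\Sigma$ is derivable from $\Sigma_k$ using sound axioms of arity at most $k$, handling the four "blocks" of $\Sigma$ in turn. The reflexivity atoms $x_i\approx x_i$ and $x_i\approx^* x_i$ are immediate from UMI1 and UMDE1 (arity $0$), and the atoms of $\Sigma_k$ itself are in $\cl_k(\Sigma_k)$ trivially. So the two substantive blocks are the "reverse'' UMDE chain $\{x_0\approx^* x_k, x_{k-1}\approx^* x_{k-2},\dots,x_2\approx^* x_1\}$ and the set $\Sigma'$ of functional dependencies.

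For the reverse UMDE atoms, I would first derive from $\Sigma_k$, via the $k$-cycle rule (which is $(k+1)$-ary, hence allowed since $\mathcal{A}$ contains all sound axioms of arity $\le k$ — wait, the cycle rule is $k+1$-ary; here one must instead invoke completeness of the full axiomatization, i.e. that $\Sigma_k\models\dep(x_{S_k(i)},x_i)$ and $\Sigma_k\models x_i\approx^* x_{S_k(i)}$, and then note these semantic consequences are themselves captured by the single sound axiom "$k$-cycle rule'' — but that axiom has arity $k+1>k$). This is the crux: I need each reverse atom to follow by a sound axiom of arity $\le k$. The trick, as in \cite{HANNULA16}, is that once we already have the forward cycle \emph{plus} the consequences, smaller sub-cycles become available: having both $\dep(x_i,x_{i+1})$ and $\dep(x_{i+1},x_i)$ for consecutive pairs gives $x_i\approx^* x_{i+1}$ and $x_{i+1}\approx^* x_i$ by the UMDE~\&~FD rule (arity $2$), and symmetry UMDE2 (arity $1$) converts $x_j\approx^* x_{S_k(j)}$ into $x_{S_k(j)}\approx^* x_j$. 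So the reverse UMDE atoms follow from the forward $\Sigma_k$-atoms together with the \emph{reverse FD atoms} $\dep(x_{i+1},x_i)$ by axioms of arity $\le 2$. Thus everything reduces to: is $\dep(x_{i+1},x_i)$ (for even $i$) in $\cl_k(\Sigma_k)$? Since this is exactly the conclusion that fails to be $k$-ary derivable (that is the whole point of the section), it is \emph{not} — but it \emph{is} listed in $\Sigma'$? Let me recheck: $\dep(x_1,x_0)\notin\Sigma'$ by construction. So the reverse UMDE atoms must be obtained another way, directly from $\Sigma_k$: namely $x_{i+1}\approx^* x_i$ is just UMDE2 applied to the $\Sigma_k$-atom $x_i\approx^* x_{i+1}$ when $i$ is odd, and for $i$ even there is no $\Sigma_k$-atom between $x_i$ and $x_{i+1}$ — but the reverse chain only lists $x_{j}\approx^* x_{j-1}$ for the \emph{odd-to-even} backward steps, matching the $\Sigma_k$ atoms $x_{j-1}\approx^* x_j$ with $j-1$ odd. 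Hence the entire reverse UMDE block is obtained by a single application of UMDE2 (arity $1$) to each corresponding forward atom of $\Sigma_k$; no cycle rule is needed.

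For $\Sigma'$, I would argue by induction on its inductive definition that every $\dep(\bar y,\bar z)\in\Sigma'$ lies in $\cl_k(\Sigma_k)$, using only FD1, FD2, FD3 (each of arity $\le 2$). The base case (i), $\var(\bar z)\subseteq\var(\bar y)\subseteq\var(\bar x)$, is FD1 plus FD2 to adjust variable order. The closure rule (iii) (reordering within a tuple) is handled by FD1 and FD2 exactly as noted after the statement of FD1--FD3 in the paper. The interesting closure rule is (ii): from $\dep(\bar y,\bar z)$ and $i\in\{0,2,\dots,k-1\}$ infer $\dep(\bar y x_i,\bar z x_{i+1})$. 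Here I use that $\dep(x_i,x_{i+1})\in\Sigma_k$; by FD3 augmenting $\dep(\bar y,\bar z)$ with $x_i$ gives $\dep(\bar y x_i,\bar z x_i)$, and augmenting $\dep(x_i,x_{i+1})$ with $\bar z$ gives $\dep(x_i\bar z,x_{i+1}\bar z)$; one application of FD2 (after a harmless reordering via (iii)-style moves, justified by FD1/FD2) yields $\dep(\bar y x_i,\bar z x_{i+1})$. Each step has arity $\le 2\le k$ (using $k\ge 3$). The main obstacle I anticipate is bookkeeping the tuple-reordering throughout — making sure that every time I reorder variables inside an FD atom I cite the FD1+FD2 argument, and that $\Sigma'$ is indeed closed under exactly the reorderings I need so that the induction goes through cleanly; the actual inferences are all short. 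Assembling the four blocks then gives $\Sigma\subseteq\cl_k(\Sigma_k)$.
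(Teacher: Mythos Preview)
Your proposal is correct and, once the detours are stripped away, follows essentially the same route as the paper: $\Sigma_k$ is trivially included; the reflexivity blocks come from UMI1 and UMDE1; the reversed UMDE block is obtained by a single application of UMDE2 to each corresponding $\Sigma_k$-atom; and $\Sigma'$ is handled by induction on its definition, with cases (i) and (iii) covered by FD1/FD2 and case (ii) by combining $\dep(\bar y,\bar z)$ with $\dep(x_i,x_{i+1})\in\Sigma_k$ via FD3 and FD2 (the paper phrases this last step as the general ``concatenation'' $\dep(\bar v,\bar w)\wedge\dep(\bar v',\bar w')\Rightarrow\dep(\bar v\bar v',\bar w\bar w')$, but your instance is the same argument). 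The long excursion trying to get the reversed UMDE atoms via reverse FDs or the cycle rule is a red herring you ultimately abandon; in a clean write-up just apply UMDE2 directly, as you conclude.
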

\begin{proof}
Clearly $\Sigma_k\subseteq\cl_k(\Sigma_k)$. The second, the third, and the fourth set of the union are subsets of $\cl_k(\Sigma_k)$ because they are obtained from $\Sigma_k$ by using the rules UMI1, UMDE1, and UMDE2, respectively. Suppose then that $\sigma\in\Sigma'$. The cases (i) and (iii) of the inductive definition of $\Sigma'$ are clear by FD1 and FD2. Thus, we may assume that $\sigma=\dep(\bar{y}x_{i},\bar{z}x_{i+1})$ for some $i\in\{0,2,\dots,k-1\}$ and $\bar{y},\bar{z}$ such that $\dep(\bar{y},\bar{z})\in\cl_k(\Sigma_k)$. As $\dep(x_{i},x_{i+1})\in\Sigma_k\subseteq\cl_k(\Sigma_k)$, it suffices to show that for all $\bar{v},\bar{v}',\bar{w},\bar{w}'$, we can infer from the atoms $\dep(\bar{v},\bar{w})$ and $\dep(\bar{v}',\bar{w}')$, the atom $\dep(\bar{v}\bar{v}',\bar{w}\bar{w}')$ by using at most $k$-ary axioms. This can be done as follows: by applying FD3 to $\dep(\bar{v},\bar{w})$ and $\dep(\bar{v}',\bar{w}')$, we obtain $\dep(\bar{v}\bar{v}',\bar{w}\bar{v}')$ and $\dep(\bar{v}'\bar{w},\bar{w}'\bar{w})$. Then by applying FD1 to $\dep(\bar{v}'\bar{w},\bar{w}'\bar{w})$, we obtain $\dep(\bar{w}\bar{v}',\bar{w}\bar{w}')$. Then by applying FD2 to $\dep(\bar{v}\bar{v}',\bar{w}\bar{v}')$ and $\dep(\bar{w}\bar{v}',\bar{w}\bar{w}')$, we obtain $\dep(\bar{v}\bar{v}',\bar{w}\bar{w}')$, as wanted. 
\end{proof}
\begin{lemma}\label{k-axiom_lemma2}
$\cl_k(\Sigma_k)\subseteq \Sigma$.
\end{lemma}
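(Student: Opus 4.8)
The plan is to show that the set $\Sigma$ is closed under every sound inference rule of arity at most $k$. Since $\Sigma_k\subseteq\Sigma$ and $\cl_k(\Sigma_k)$ is the least superset of $\Sigma_k$ that is closed under all such rules, this gives $\cl_k(\Sigma_k)\subseteq\Sigma$ at once. Concretely, I would reduce everything to the following local claim: \emph{for every $T\subseteq\Sigma$ with $|T|\le k$ one has $\cl(T)\subseteq\Sigma$}, where $\cl$ is the closure operator of the full (infinite) axiomatization. Granting the claim, one argues by induction along an arbitrary derivation $\Sigma_k\vdash_{\mathcal A}\sigma$: whenever a sound at-most-$k$-ary rule is applied to a set $T$ of atoms derived so far, the induction hypothesis gives $T\subseteq\Sigma$ with $|T|\le k$, soundness of the rule gives $T\models\sigma$, and by soundness and completeness of the full axiomatization (established earlier in Section~\ref{AxiomatizationSection}) this means $\sigma\in\cl(T)\subseteq\Sigma$.

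To prove the local claim, fix $T\subseteq\Sigma$ with $|T|\le k$ and let $\cl_0(T)$ be the closure of $T$ under only the ``non-cyclic'' axioms FD1--FD3, UMI1--UMI3, UMDE1--UMDE3 and UMI \& UMDE. First I would check $\cl_0(T)\subseteq\Sigma$, which is routine: the functional dependencies of $\Sigma$ are exactly $\Sigma'$, which is closed under FD1--FD3 (it is, up to reordering tuples, the FD-closure of the ``forward'' atoms $\dep(x_{2i},x_{2i+1})$); the marginal distribution equivalences of $\Sigma$ are, up to orientation and reflexive loops, the perfect matching $\{x_1,x_2\},\{x_3,x_4\},\dots,\{x_{k-2},x_{k-1}\},\{x_k,x_0\}$ on $\{x_0,\dots,x_k\}$, and closing matching edges under reflexivity, symmetry and transitivity yields no new UMDE; finally no nonreflexive UMI lies in $\Sigma$ and none is produced by the equivalence axioms or by UMI \& UMDE (which turns a reflexive $\approx$ into a reflexive $\approx^*$).

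The heart of the argument is to show $\cl(T)=\cl_0(T)$, i.e.\ that no $k'$-cycle rule ever becomes applicable while $\cl(T)$ is built. I would look at the first moment such a rule could fire; then all of its premises already lie in $\cl_0(T)$. Because the cycle rules speak only about unary atoms, the description of $\Sigma'$ forces each FD-premise to be either reflexive or of the form $\dep(x_{2c},x_{2c+1})$, and each UMDE-premise to be reflexive or one of the matching edges. The key observation is that, after contracting the reflexive edges, the closed walk $z_0\to z_1\sim z_2\to z_3\sim\cdots\to z_{k'}\sim z_0$ traced by the premises is \emph{forced}: from an even vertex $x_{2a}$ the only forward FD edge goes to $x_{2a+1}$, and from $x_{2a+1}$ the only matching edge goes to $x_{2a+2}$ (to $x_0$ when $2a+1=k$); a short check --- here $k\ge 3$ is used --- shows distinct non-reflexive edges never cancel, so the only nontrivial closed walk of this shape is the full $(k{+}1)$-cycle. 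Hence all $k+1$ atoms of $\Sigma_k$ (up to orientation of the UMDEs) occur among the premises, and therefore $\Sigma_k\subseteq\cl_0(T)$.

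This is impossible when $|T|\le k$, because the $k+1$ atoms of $\Sigma_k$ are ``independent'' with respect to the non-cyclic axioms. Indeed, deriving a cycle UMDE $x_{2b-1}\approx^* x_{2b}$ from $T$ via UMDE1--UMDE3 requires $T$ to contain that atom or its reverse, and these witnesses lie on pairwise-disjoint vertex pairs; deriving a cycle FD $\dep(x_{2a},x_{2a+1})$ from $T$ via FD1--FD3 requires $T$ to contain an FD $\dep(\bar y,\bar z)$ with $\var(\bar y)=\{x_{2a}\}$ and $x_{2a+1}\in\var(\bar z)$, and these witnesses are pairwise distinct and of a type different from the UMDE witnesses. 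So $T$ would need at least $(k{+}1)/2+(k{+}1)/2=k+1$ distinct atoms, contradicting $|T|\le k$; hence no cycle rule fires, $\cl(T)=\cl_0(T)\subseteq\Sigma$, and the lemma follows. The main obstacle I anticipate is making the ``forced walk'' and ``independence'' arguments fully precise: identifying exactly which unary FDs and UMDEs can appear in $\cl_0(T)$, verifying that distinct non-reflexive edges never cancel under contraction, dealing with repeated variables in a cycle-rule instantiation and with the degenerate case in which all premises contract away, and confirming that $\Sigma'$ is indeed closed under FD1--FD3.
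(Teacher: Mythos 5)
Your proposal is essentially correct but takes a genuinely different route from the paper. The paper argues semantically: for the last rule application with antecedent $\Delta\subseteq\Sigma$, $|\Delta|\le k$, it picks a missing atom $\delta\in\Sigma_k\setminus\Delta$ and explicitly constructs (reusing the Armstrong-relation machinery of Definition~\ref{construction2} and Lemma~\ref{FD_lemma2}, applied to $\cl(\Sigma_k\setminus\{\delta\})$ together with an auxiliary set $A$ of binary FDs) a probabilistic team satisfying $\Delta$ but not $\sigma$, contradicting soundness. You instead argue syntactically: reduce to the local claim that $\cl(T)\subseteq\Sigma$ for every $T\subseteq\Sigma$ with $|T|\le k$ (invoking completeness of the full axiomatization from Section~\ref{completeness&armstrong_rel} to pass from $T\models\sigma$ to $\sigma\in\cl(T)$, which is legitimate and non-circular), then verify that $\Sigma$ is closed under the non-cycle axioms and that no cycle rule can produce a new atom from premises in $\cl_0(T)$. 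What the paper's route buys is concrete witnessing models and no need to analyze the closure operator on $\Sigma$ itself; what your route buys is that it avoids any new model construction and isolates the combinatorial core (the counting of independent witnesses in $T$) very cleanly. One claim in your sketch does need repair, and it is exactly at the spot you flag: it is \emph{not} true that the only nontrivial closed alternating walk is the full $(k{+}1)$-cycle, because the matching (UMDE) edges are undirected and a walk may traverse one forward and then backward (e.g.\ $x_1\approx^* x_2$, reflexive FD at $x_2$, $x_2\approx^* x_1$, reflexive FD at $x_1$ is a legitimate $3$-cycle-rule instance inside $\Sigma$). The correct statement is a winding-number argument: in a closed walk on the cycle $x_0x_1\cdots x_k$ the net signed number of traversals is the same for every edge, and since the FD edges $\{x_{2a},x_{2a+1}\}$ can only be traversed forward, either every FD premise is reflexive (so the rule's conclusions are reflexive and already in $\Sigma$) or the walk winds around the whole cycle and all $k+1$ atoms of $\Sigma_k$ (up to UMDE orientation) lie in $\cl_0(T)$; your witness-counting argument then finishes the job. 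With that fix, and with the routine verification that $\Sigma'$ is the FD1--FD3 closure of the forward atoms, your proof goes through.
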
 
\begin{proof} We prove the claim by induction on the length of the deduction of the atoms $\sigma\in\cl_k(\Sigma_k)$.

The case $n=0$ is clear, since $\Sigma_k\subseteq\Sigma$.

For the case $n=m+1$, let $\sigma\in\cl_k(\Sigma_k)$ be such that the length of its deduction from $\Sigma_k$ is $n$. Let $\delta_1,\dots,\delta_k\in\cl_k(\Sigma_k)$ be the atoms that appear in the last step of deduction. (Note that it might be that $\delta_i=\delta_j$ for some $i\neq j$.) By the induction hypothesis $\delta_1,\dots,\delta_k\in\Sigma$. Denote $\Delta:=\{\delta_1,\dots,\delta_k\}$. Suppose then for a contradiction that $\sigma\not\in\Sigma$. We will show that then $\Delta\not\models\sigma$. Let $\delta\in\Sigma_k\backslash\Delta$. Such $\delta$ exists because $\Sigma_k$ has $k+1$ different atoms, whereas $\Delta$ only has at most $k$ atoms. Note that by symmetry of the cycle, it suffices to consider the cases  $\delta=\dep(x_0,x_1)$ and $\delta=x_k\approx^*x_0$. 

Consider first the case: $\delta=\dep(x_0,x_1)$. Since $\Delta\subseteq\Sigma\backslash\{\delta\}$, we have $\Sigma\backslash\{\delta\}\models\Delta$. Define the set $A:=\{\dep(x_{0}x_{i},x_{1}x_{i+1})\mid i\in\{2,4,\dots,k-1\}\}$. It can be easily checked that $\Sigma_k\backslash\{\delta\}\cup A\models\Sigma\backslash\{\delta\}$. This means that $\Sigma_k\backslash\{\delta\}\cup A\models\Delta$, so it suffices to show that $\Sigma_k\backslash\{\delta\}\cup A\not\models\sigma$.

Hence, we construct a probabilistic team $\X$ that satisfies $\Sigma_k\backslash\{\delta\}\cup A$ but not $\sigma$. We will see that we can obtain $\X$ by constructing a probabilistic team for the set $\cl(\Sigma_k\backslash\{\delta\})$ as described in Definition \ref{construction2} and Lemma \ref{basic_lem2}. By the construction, we have $\X\models\Sigma_k\backslash\{\delta\}$, so it still remains to check that $\X\models A$. Suppose that $\X\not\models\dep(x_{0}x_{i},x_{1}x_{i+1})$. Since the only repeating value in the columns is zero, the atom $\dep(x_{0}x_{i},x_{1}x_{i+1})$ has to be violated by two rows in which the values of $x_0$ and $x_i$ are both zero. The vertex $x_0$ forms its own maximal red clique, which has no red descendants. (See Figure \ref{nonexfig}.) Hence, by the construction, there can be only one row (the first one) which has zeroes in both columns $x_0$ and $x_i$. Therefore, $\X\models A$.

Now, we show that $\X\not\models\sigma$. First, suppose that $\sigma$ is a constancy atom, UFD, UMI, or UMDE. Then, by the construction, $\X\models\sigma$ if and only if the graph $G(\cl(\Sigma_k\backslash\{\delta\}))$ has the corresponding edge. (Note that since $\Sigma_k\backslash\{\delta\}$ does not imply any constancy atoms, $\X$ does not satisfy any constancy atoms either.) From Figure \ref{nonexfig}, it is easy to see that every edge in $G(\cl(\Sigma_k\backslash\{\delta\}))$ has the corresponding atom in $\Sigma$. This means that $\X\models\sigma$ implies $\sigma\in\Sigma$. Since $\sigma\not\in\Sigma$, we have $\X\not\models\sigma$, and hence $\Sigma_k\backslash\{\delta\}\cup A\not\models\sigma$.

Suppose then that $\sigma$ is a nonunary FD. We modify the previous construction in a way that is similar to Lemma \ref{FD_lemma2}. The difference is that, instead of adding a row for each nonunary FD that is not implied by the atoms, we only need to add one for $\sigma$. In other words, if $\sigma=\dep(\bar{y},\bar{z})$, then we add a tuple with zeroes in exactly those positions $u$ that are functionally determined by $\bar{y}$, i.e., those for which $\cl(\Sigma_k\backslash\{\delta\})\cup A\models\dep(\bar{y},u)$. The construction then follows as in Lemma \ref{FD_lemma2}. Denote the constructed probabilistic team by $\X'$. Now by Lemma \ref{FD_lemma2}, $\X'\models\Sigma_k\backslash\{\delta\}$ and $\X'\not\models\sigma$. 

It remains to show that $\X'\models A$. Suppose that $\X'\not\models\dep(x_{0}x_{i},x_{1}x_{i+1})$. As before, the only repeating value in the columns is zero, so the atom $\dep(x_{0}x_{i},x_{1}x_{i+1})$ has to be violated by two rows in which the values of $x_0$ and $x_i$ are both zero. These two rows have to be the row that contains only zeroes and the new row that was added. If the new added row has zeroes in columns $x_0$ and $x_i$, both variables have to be functionally determined by $\bar{y}$. Since we check whether $u$ are functionally determined by $\bar{y}$ with respect to the whole set $\cl(\Sigma_k\backslash\{\delta\})\cup A$ and $\dep(x_{0}x_{i},x_{1}x_{i+1})\in A$, variables $x_1$ and $x_{i+1}$ are also functionally determined by $\bar{y}$, and both have value zero. Hence, the two rows cannot violate the atom, and it must be that $\X'\models A$. Therefore, $\Sigma_k\backslash\{\delta\}\cup A\not\models\sigma$.

Consider then the case: $\delta=x_k\approx^*x_0$. Now, $\Delta\subseteq\Sigma\backslash\{\delta, x_0\approx^*x_k\}$, so we have $\Sigma\backslash\{\delta, x_0\approx^*x_k\}\models\Delta$. Since in this case $\Sigma_k\backslash\{\delta\}\models\Sigma\backslash\{\delta, x_0\approx^*x_k\}$, it suffices to show that $\Sigma_k\backslash\{\delta\}\not\models\sigma$. The proof is omitted as it is very similar  to (albeit simpler than) the one in the previous case.
\end{proof}


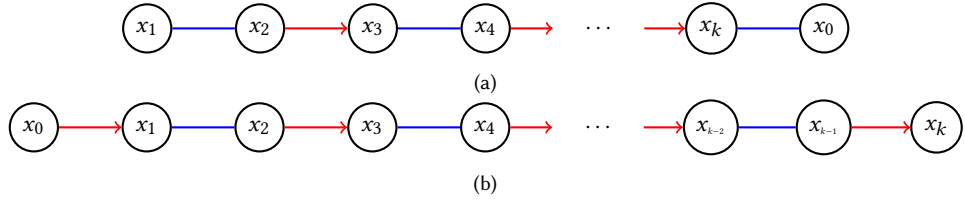
\begin{figure}
 \centering
 \begin{subfigure}[a]{1\textwidth}
 \centering
  \begin{tikzpicture}[node distance={15mm},thick, main/.style = {draw, circle}] 
\node[main] (1) {$x_1$}; 
\node[main] (2) [right of=1] {$x_2$};
\node[main] (3) [right of=2] {$x_3$};
\node[main] (4) [right of=3] {$x_4$};
\node (5) [right of=4] {$\quad\dots\quad$};
\node[main] (6) [right of=5] {$x_k$};
\node[main] (0) [right of=6] {$x_0$};
\draw[->] [color=red] (2) -- (3);
\draw[->] [color=red] (4) -- (5);
\draw[->] [color=red] (5) -- (6);
\draw [color=blue] (1) -- (2);
\draw [color=blue] (3) -- (4);
\draw [color=blue] (6) -- (0);
\end{tikzpicture} 
\caption{}
\end{subfigure}
\begin{subfigure}[b]{1\textwidth}
\centering
  \begin{tikzpicture}[node distance={15mm},thick, main/.style = {draw, circle}] 
  \node[main] (0) {$x_0$};
\node[main] (1) [right of=0]{$x_1$}; 
\node[main] (2) [right of=1] {$x_2$};
\node[main] (3) [right of=2] {$x_3$};
\node[main] (4) [right of=3] {$x_4$};
\node (5) [right of=4] {$\quad\dots\quad$};
\node[main] (6) [right of=5] {$x_{\scaleto{k-2}{2.5pt}}$};
\node[main] (7) [right of=6] {$x_{\scaleto{k-1}{2.5pt}}$};
\node[main] (8) [right of=7] {$x_{k}$};
\draw[->] [color=red] (0) -- (1);
\draw[->] [color=red] (2) -- (3);
\draw[->] [color=red] (4) -- (5);
\draw[->] [color=red] (5) -- (6);
\draw[->] [color=red] (7) -- (8);
\draw [color=blue] (1) -- (2);
\draw [color=blue] (3) -- (4);
\draw [color=blue] (6) -- (7);
\end{tikzpicture} 
\caption{}
\end{subfigure}
  \caption{The graph $G(\cl(\Sigma_k\backslash\{\delta\}))$ in the two cases (a) $\delta=\dep(x_0,x_1)$ and (b) $\delta=x_k\approx^*x_0$. For the sake of clarity, all the self-loops have been removed.}
  \label{nonexfig}
\end{figure}
From Lemmas \ref{k-axiom_lemma1} and \ref{k-axiom_lemma2}, it follows that $\Sigma=\cl_k(\Sigma_k)$, and therefore no sound $k$-ary axiomatization can be complete. Since any finite axiomatization would be $k$-ary for some natural number $k$, we obtain the following theorem:
\begin{theorem}
There is no finite (complete) axiomatization for functional dependencies, unary marginal identities, and unary marginal distribution equivalences.
\end{theorem}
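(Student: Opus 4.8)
Looking at this, the theorem is a nearly-immediate corollary of the two lemmas just stated (Lemmas on $\Sigma \subseteq \cl_k(\Sigma_k)$ and $\cl_k(\Sigma_k) \subseteq \Sigma$), which together establish $\Sigma = \cl_k(\Sigma_k)$ for each odd $k \geq 3$. So my "proof proposal" is really a proposal for how the whole argument hangs together, treating those two lemmas as given.

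\medskip

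The plan is as follows. First, I would fix an arbitrary $k \in \{3, 5, 7, \dots\}$ and observe that, by soundness and completeness of the full (infinite) axiomatization together with the $k$-cycle rule, we have $\Sigma_k \models \dep(x_1, x_0)$, since the $k$-cycle rule derives $\dep(x_{S_k(i)}, x_i)$ for all even $i$, in particular $\dep(x_1, x_0)$. Next, I would invoke Lemmas~\ref{k-axiom_lemma1} and~\ref{k-axiom_lemma2} to conclude $\cl_k(\Sigma_k) = \Sigma$. Then I would check directly from the explicit description of $\Sigma = \Sigma_k \cup \{\text{reflexive UMIs}\} \cup \{\text{reflexive UMDEs}\} \cup \{\text{reversed UMDEs}\} \cup \Sigma'$ that $\dep(x_1, x_0) \notin \Sigma$: it is not in $\Sigma_k$; it is not among the (reflexive or reversed) equivalence atoms since those are all $\approx$- or $\approx^*$-atoms, not FDs; and it is not in $\Sigma'$ because the inductive clauses (i)--(iii) defining $\Sigma'$ only ever produce FDs $\dep(\bar y, \bar z)$ where $\var(\bar z) \subseteq \var(\bar y x_{i_1} \cdots x_{i_m})$ for the indices used — and in particular one checks by induction on the construction of $\Sigma'$ that $\dep(x_1, x_0)$ never appears (the only atoms with a singleton right-hand side $x_0$ that arise have $x_0$ already on the left, e.g.\ via clause (ii) producing $\dep(\bar y x_{k-1}, \bar z x_k)$ forms but never isolating $x_0$ as a consequence of $x_1$ alone). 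Hence $\dep(x_1, x_0) \in \cl(\Sigma_k)$ but $\dep(x_1, x_0) \notin \cl_k(\Sigma_k)$, so no sound $k$-ary axiomatization is complete for FD+UMI+UMDE implication.

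\medskip

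Finally, I would close the argument with the standard observation that any finite axiomatization is $k$-ary for some $k$: a finite set of rules has only finitely many antecedents, so taking $k$ to be the maximum antecedent size over all rules makes the whole axiomatization $k$-ary. Applying the previous paragraph with that $k$ (and noting $k$ can be taken odd and $\geq 3$ by padding, or by simply using a larger odd bound) yields a concrete implication $\Sigma_k \models \dep(x_1,x_0)$ that the finite axiomatization cannot derive, contradicting completeness. This gives the theorem.

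\medskip

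The genuinely hard content lives in Lemmas~\ref{k-axiom_lemma1} and~\ref{k-axiom_lemma2}, which I am assuming; within the present theorem's own proof the only step requiring care is verifying $\dep(x_1,x_0) \notin \Sigma'$, i.e.\ that the inductive closure generating $\Sigma'$ genuinely avoids the target atom. The subtlety is that $\Sigma'$ is closed under an augmentation-like rule (clause (ii) appends $x_i \mapsto x_{i+1}$ pairs) and under reordering of variables within tuples (clause (iii)), so one must argue that no sequence of these operations, starting from the trivially-true FDs of clause (i), can collapse down to the unary non-trivial FD $\dep(x_1, x_0)$ — intuitively because $x_0$ only ever enters a right-hand side paired with $x_0$ already present on the left via the cycle edges, never as a fresh consequence. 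That bookkeeping is the main (though routine) obstacle; everything else is immediate from the cited lemmas and the definition of $k$-ary.
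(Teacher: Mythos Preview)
Your proposal is correct and follows essentially the same route as the paper: the paper likewise observes that $\Sigma_k\models\dep(x_1,x_0)$ via the $(k{+}1)$-ary cycle rule, combines Lemmas~\ref{k-axiom_lemma1} and~\ref{k-axiom_lemma2} to get $\cl_k(\Sigma_k)=\Sigma$, notes $\dep(x_1,x_0)\notin\Sigma$, and finishes with the remark that any finite axiomatization is $k$-ary for some $k$. Your extra care in arguing $\dep(x_1,x_0)\notin\Sigma'$ (via the invariant that $x_0$ can appear on the right of an atom in $\Sigma'$ only if it already appears on the left) is more detailed than the paper, which simply asserts this ``by the definition of $\Sigma'$''.
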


\section{Simulating FD+UMI implication with FD+UIND implication}\label{simulation_ch}

In this section, we show that the implication problem for FDs+UMIs can be simulated with the implication problem for FDs+UINDs by simulating each marginal identity $x\approx y$ with two inclusion atoms $x\subseteq y$ and $y\subseteq x$. Since the implication problem for FDs+UINDs has a polynomial-time algorithm \cite{CosmadakisKV90}, this means that the same algorithm can also be used for FD+UMI implication, and therefore the latter problem is also in polynomial-time.

\subsection{Axioms for FD+UMI and FD+UIND}

Axioms UMI1--UMI3, FD1--FD3, and the following variant of the $k$-cycle rules for all $k\in\{1,3,5,\dots\}$: 
\begin{align*}
&\text{If } \dep(x_i,x_{S_k(i)}) \text{ for all even } 0\leq i\leq k-1 \text{ and }x_j\approx x_{S_k(j)} \text{ for all odd } 1\leq j\leq k,\\
&\text{ then } \dep(x_{S_k(i)},x_i) \text{ for all even } 0\leq i\leq k-1,
\end{align*}
form a sound and complete axiomatization for the implication problem for FD+UMI. Soundness of this variant of the cycle rule immediately follows from the fact that $x\approx y$ implies $x\approx^*  y$. For the completeness of the axiomatization, we can construct the Armstrong relations as in Section \ref{completeness&armstrong_rel}.

The finite implication problem for FD+UIND as been axiomatized in \cite{CosmadakisKV90}. The axioms are listed below. For unary inclusion dependecies, we have reflexivity and transitivity.
\begin{itemize}
\item[] UIND1: $x\subseteq x$
\item[] UIND2: If $x\subseteq y$ and $y\subseteq z$, then $x\subseteq z$.
\end{itemize}
For functional dependencies, we take the Armstrong axiomatization (axioms FD1--FD3) from Section \ref{axioms}.

For the unary functional dependencies and unary inclusion dependencies, we have the \textit{k-cycle rules} for all $k\in\{1,3,5,\dots\}$:
\begin{align*}
&\text{If } \dep(x_i,x_{S_k(i)}) \text{ for all even } 0\leq i\leq k-1 \text{ and } x_{S_k(j)}\subseteq x_{j} \text{ for all odd } 1\leq j\leq k\\
&\text{ then } \dep(x_{S_k(i)},x_{i}) \text{ for all even } 0\leq i\leq k-1 \text{ and } x_{j}\subseteq x_{S_k(j)} \text{ for all odd } 1\leq i\leq k.
\end{align*}
 
The soundness of $k$-cycle rules for unary functional dependencies and unary inclusion dependencies was shown in \cite{CosmadakisKV90} but we include a brief proof for the sake of self-containment.

Suppose now that the antecedent of the $k$-cycle rule holds for $\X$. It then follows that 
\[
|X(x_0)|\geq|X(x_1)|\geq|X(x_2)|\geq\dots\geq|X(x_{k-1})|\geq|X(x_{k})|\geq|X(x_{0})|, 
\]
which implies that $|X(x_i)|=|X(x_j)|$ for all $i,j\in \{0,\dots k\}$. If $i,j$ are such that $\X\models\dep(x_i,x_j)$, then $\X\models\dep(x_j,x_i)$ follows from the same argument as in the case of $k$-cycle rules for unary functional dependencies and unary marginal distribution equivalence in Section \ref{axioms}. If $i,j$ are such that $\X\models x_i\subseteq x_j$, then $X(x_i)\subseteq X(x_j)$. When combined with the fact that $|X(x_i)|=| X(x_j)|$, we obtain that $X(x_j)\subseteq X(x_i)$, i.e., $\X\models x_j\subseteq x_i$.


\subsection{Simulation}

The following theorem shows that FD+UMI implication can be simulated with FD+UIND implication:

\begin{theorem}\label{simulation}
Let $\Sigma_{\text{FD}}$ and $\Sigma_{\text{UMI}}$ be sets of functional dependencies and unary marginal identities, respectively. Denote $\Sigma=\Sigma_{\text{FD}}\cup\Sigma_{\text{UMI}}$, and define $\Sigma^*=\Sigma_{\text{FD}}\cup\{x\subseteq y\mid x\approx y\in\Sigma_{\text{UMI}} \text{ or } y\approx x\in\Sigma_{\text{UMI}}\}$. Then for all functional dependencies $\sigma$ and unary marginal identities $x\approx y$
\begin{itemize}
\item[(i)] $\Sigma\models\sigma$ if and only if $\Sigma^*\models\sigma$,
\item[(ii)] $\Sigma\models x\approx y$ if and only if $\Sigma^*\models x\subseteq y$ and $\Sigma^*\models y\subseteq x$.
\end{itemize}
\end{theorem}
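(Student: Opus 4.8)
The plan is to prove both directions of each equivalence by relating probabilistic teams satisfying $\Sigma$ to relational structures satisfying $\Sigma^*$, and vice versa. The key observation is that a marginal identity $x\approx y$ is strictly stronger than the conjunction $x\subseteq y \wedge y\subseteq x$ on probabilistic teams, but on the \emph{relational} level (forgetting weights) it collapses to exactly that conjunction; meanwhile functional dependencies are insensitive to weights. So the two problems should coincide once we check that any counterexample on one side can be transported to the other.

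For the ``hard'' direction of (i), namely $\Sigma^*\models\sigma \implies \Sigma\models\sigma$, I would argue contrapositively: suppose $\X\models\Sigma$ but $\X\not\models\sigma$ for some probabilistic team $\X$ over team $X$. Then the underlying relation $X$ (viewed as a set of tuples, i.e. take the uniform distribution, or just work relationally) satisfies every FD in $\Sigma_{\text{FD}}$, and for each $x\approx y\in\Sigma_{\text{UMI}}$ we have $|\X_{x=a}|=|\X_{y=a}|$ for all $a$, which in particular forces $X(x)=X(y)$, hence $X\models x\subseteq y$ and $X\models y\subseteq x$; thus $X\models\Sigma^*$. Since $\sigma$ is an FD and FDs depend only on the relation, $X\not\models\sigma$, so $\Sigma^*\not\models\sigma$. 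The reverse direction $\Sigma\models\sigma\implies\Sigma^*\models\sigma$ requires producing, from any relation $Y\models\Sigma^*$, a probabilistic team $\X\models\Sigma$ with $\X\not\models\sigma$ whenever $Y\not\models\sigma$: here I would put a weight distribution on $Y$ making the marginal identities hold. The natural choice is the uniform distribution, but that need not satisfy $x\approx y$ from $x\subseteq y\wedge y\subseteq x$ alone; the right move is instead to invoke the completeness/Armstrong-relation machinery of Section~\ref{completeness&armstrong_rel} (or of \cite{CosmadakisKV90} for FD+UIND) to reduce to comparing closures. Concretely, since FD+UIND has Armstrong relations and FD+UMI has Armstrong relations (noted just above the theorem), it suffices to show $\cl(\Sigma)$ and $\cl(\Sigma^*)$ agree on FDs, which can be done by comparing the graph constructions: the UMI-cycle rules and the UIND-cycle rules generate the same FD consequences because the length-and-surjectivity counting argument underlying both soundness proofs is identical.

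Part (ii) then follows from (i) together with the observation that $\Sigma\models x\approx y$ is equivalent to $\Sigma\models x\subseteq y$-type statements once we know the closures match. Precisely, $\Sigma\models x\approx y$ iff $x\approx y\in\cl_{\text{FD+UMI}}(\Sigma)$ (completeness), and by inspecting the FD+UMI axioms, a UMI $x\approx y$ is derivable iff both $x$ and $y$ lie in the same strongly connected component formed by the FD and UMI edges — equivalently, iff in $\Sigma^*$ we can derive both $x\subseteq y$ and $y\subseteq x$ (the UIND-cycle rules produce exactly these back-edges). So $\Sigma\models x\approx y$ iff $\Sigma^*\vdash x\subseteq y$ and $\Sigma^*\vdash y\subseteq x$, iff $\Sigma^*\models x\subseteq y$ and $\Sigma^*\models y\subseteq x$ by soundness and completeness of the FD+UIND axiomatization.

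The main obstacle I anticipate is the direction that builds a probabilistic team from a relational one: uniform weights do not automatically upgrade $x\subseteq y\wedge y\subseteq x$ to $x\approx y$, since the multiplicities of values can differ. I expect the cleanest fix is not to manipulate an arbitrary $Y\models\Sigma^*$ directly but to work entirely at the level of closures/derivations — showing $\cl_{\text{FD+UMI}}(\Sigma)$ restricted to FDs equals $\cl_{\text{FD+UIND}}(\Sigma^*)$ restricted to FDs — and then appeal to soundness and completeness on both sides, which sidesteps the need to construct a witnessing distribution by hand. If the authors instead prefer a direct model-theoretic argument, the Armstrong relation constructed in Section~\ref{completeness&armstrong_rel} for $\Sigma$ (a uniform distribution over a carefully built relation where the SCC/clique structure is engineered precisely so the marginal identities hold) is exactly the object that shows any FD not in the closure fails in a genuine probabilistic model, and one checks its underlying relation satisfies $\Sigma^*$.
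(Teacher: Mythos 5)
Your semantic argument for the direction $\Sigma^*\models\sigma\Rightarrow\Sigma\models\sigma$ of (i) is correct and is actually cleaner than what the paper does (the paper runs everything syntactically, by induction on derivation length, using soundness and completeness on both sides). You also correctly diagnose the main obstacle in the other direction — mutual inclusion plus uniform weights does not yield a marginal identity — and the right general remedy, namely comparing closures rather than transporting models. However, there are two genuine gaps. First, your characterization in part (ii) is false: you claim $x\approx y$ is derivable iff $x$ and $y$ lie in the same strongly connected component formed by the FD and UMI edges, but $\Sigma=\{\dep(x,y),\dep(y,x)\}$ puts $x$ and $y$ in one strongly connected component while $\Sigma\not\models x\approx y$ (take the team $\{(0,0),(1,2)\}$ with uniform weights: both FDs hold, but $|\X_{x=1}|=1/2\neq 0=|\X_{y=1}|$). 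The cycle rules produce FD back-edges and reversals of already-derivable inclusions; they do not create new marginal identities or new inclusion connections between merely FD-connected variables. The correct characterization is connectivity in the UMI (respectively UIND) subgraph alone, and establishing it requires an actual induction over derivations, not an appeal to the SCC structure.

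Second, the step you wave through — ``the UMI-cycle rules and the UIND-cycle rules generate the same FD consequences because the counting argument is identical'' — is precisely where the paper needs its key auxiliary result, Lemma~\ref{uind_lemma}: whenever $\Sigma^*\vdash x\subseteq y$, also $\Sigma^*\vdash y\subseteq x$ with a derivation of at most the same length. This is what lets the induction go through at the transitivity and cycle-rule steps when translating an FD$+$UIND derivation back into an FD$+$UMI one (the UIND cycle rule's antecedent contains inclusions in only one direction, and to recover the corresponding $\approx$-atoms via the induction hypothesis you need both directions derivable with controlled length). That lemma holds only because $\Sigma^*$ is built symmetrically from $\Sigma_{\text{UMI}}$; it is not a generic feature of the two axiom systems, and without it your closure-matching claim is unsupported. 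So the skeleton is right but the right-to-left directions of both (i) and (ii) are not actually proved as written.
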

\begin{proof}
Note that since both axiomatizations are sound and complete, we can substitute ``$\models$'' with ``$\vdash$'' in the above theorem. 

In the proof of Theorem \ref{simulation}, we will use the following Lemma, which we prove first:
\begin{lemma}\label{uind_lemma}
Let $\Sigma$ and $\Sigma^*$ be as in Theorem \ref{simulation}. Suppose that $\Sigma^*\vdash x\subseteq y$. Then $\Sigma^*\vdash y\subseteq x$ and the length of the deduction of $y\subseteq x$ is at most that of  $x\subseteq y$.
\end{lemma}
\begin{proof}
We prove the claim by induction on the length of the deduction.
\begin{itemize}
\item $n=0$. This means that $x\subseteq y\in\Sigma^*$, so the case is clear by the definition of $\Sigma^*$.
\item $n=m+1$. The case of the reflexivity axiom ``$x\subseteq x$'' is clear.

Suppose that the last axiom used in the deduction of  $x\subseteq y$ is transitivity. Then there is some variable $z$ such that  $\Sigma^*\vdash x\subseteq z$ and $\Sigma^*\vdash z\subseteq y$. From the induction hypothesis, we obtain $\Sigma^*\vdash y\subseteq z$ and $\Sigma^*\vdash z\subseteq x$. Then by the transitivity of the inclusion atoms, $\Sigma^*\vdash y\subseteq x$. Note also that by the induction hypothesis, the  length of the deduction of $y\subseteq x$ is at most that of  $x\subseteq y$.

Suppose then that the last axiom used in the deduction of  $x\subseteq y$ is the cycle rule. Then the claim follows trivially since $y\subseteq x$ appears already in the assumptions of the rule.
\end{itemize}

\end{proof}

Now, we are ready to prove Theorem \ref{simulation}. The proof is again by induction on the length of the deduction. For notational convenience, we use the logical conjunction ``$\wedge$'' in the usual meaning, i.e., the notation $\Sigma\vdash\sigma_0\wedge\sigma_1$ means that $\Sigma\vdash\sigma_0$ and $\Sigma\vdash\sigma_1$. 

We begin by showing that the ``from left to right'' sides of the two items in the theorem hold.
\begin{itemize}
\item $n=0$. This means that $\sigma\in\Sigma$ or $x\approx y\in\Sigma$. This case is clear by the definitions of $\Sigma$ and $\Sigma^*$.
\item $n=m+1$. The case of the reflexivity axiom ``$x\approx x$'' is clear since ``$x\subseteq x$'' is an axiom in the other axiomatization.

Suppose that the last axiom used in the deduction of  $x\approx y$ is symmetry. Then $\Sigma\vdash y\approx x$, and by the induction hypothesis $\Sigma^*\vdash y\subseteq x$ and $\Sigma^*\vdash x\subseteq y$.

Suppose that the last axiom used in the deduction of  $x\approx y$ is transitivity. Then there is some variable $z$ such that  $\Sigma\vdash x\approx z$ and  $\Sigma\vdash z\approx y$. From the induction hypothesis, we obtain $\Sigma^*\vdash x\subseteq z$, $\Sigma^*\vdash z\subseteq y$, $\Sigma^*\vdash y\subseteq z$, and $\Sigma^*\vdash z\subseteq x$. Then by the transitivity of the inclusion atoms, $\Sigma^*\vdash x\subseteq y$ and $\Sigma^*\vdash y\subseteq x$.

The cases of the FD axioms are clear since all of these axioms are in both of the axiomatizations.

Suppose that the last axiom used in the deduction of  $\dep(x,y)$ is the cycle rule. Then $\Sigma\vdash \dep(x_0,x_1)\wedge x_1\approx x_2\wedge \dots \wedge \dep(x_{k-1},x_k)\wedge x_k\approx x_0$, where $\dep(x,y)\in\{\dep(x_1,x_0),\dots,\dep(x_k,x_{k-1})\}$. By the induction hypothesis, we have $\Sigma^*\vdash \dep(x_0,x_1)\wedge x_2\subseteq x_1\wedge \dots \wedge \dep(x_{k-1},x_k)\wedge x_0\subseteq x_k$. From the cycle rule, we obtain $\Sigma^*\vdash \dep(x_1,x_0)\wedge x_1\subseteq x_2\wedge \dots \wedge \dep(x_k,x_{k-1})\wedge x_k\subseteq x_0$.

\end{itemize}

Next, we prove the ``from right to left'' sides of the items.
\begin{itemize}
\item $n=0$. This means that $\sigma\in\Sigma^*$ or both $x\subseteq y, y\subseteq x\in\Sigma^*$. This case is clear by the definitions of $\Sigma$ and $\Sigma^*$.
\item $n=m+1$. The case of the reflexivity axiom ``$x\subseteq x$'' is clear since ``$x\approx x$''  is an axiom in the other axiomatization.

Suppose that the last axiom used in the deduction of  $x\subseteq y$ is transitivity. Then there is some variable $z$ such that  $\Sigma^*\vdash x\subseteq z$ and $\Sigma^*\vdash z\subseteq y$. From Lemma \ref{uind_lemma}, we obtain $\Sigma^*\vdash z\subseteq x$ and $\Sigma^*\vdash y\subseteq z$, and that the lengths of the deductions of $z\subseteq x$ and $y\subseteq z$ are at most that of $x\subseteq z$ and $z\subseteq y$, respectively. Now $\Sigma^*\vdash x\subseteq z\wedge z\subseteq x$ and $\Sigma^*\vdash z\subseteq y\wedge y\subseteq z$. Thus by the induction hypothesis, we have $\Sigma\vdash x\approx z\wedge z\approx y$. Then by transitivity, we have $\Sigma\vdash x\approx y$.

The cases of the FD axioms are clear since all of these axioms are in both of the axiomatizations.

Suppose that the last axiom used in the deduction of  $\dep(x,y)$ is the cycle rule. Then $\Sigma^*\vdash \dep(x_0,x_1)\wedge x_2\subseteq x_1\wedge \dots \wedge \dep(x_{k-1},x_k)\wedge x_0\subseteq x_k$, where $\dep(x,y)\in\{\dep(x_1,x_0),\dots,\dep(x_k,x_{k-1})\}$. By Lemma \ref{uind_lemma}, we have $\Sigma^*\vdash x_1\subseteq x_2\wedge \dots \wedge x_k\subseteq x_0$, so by the induction hypothesis, we have $\Sigma\vdash \dep(x_0,x_1)\wedge x_1\approx x_2\wedge \dots \wedge \dep(x_{k-1},x_k)\wedge x_k\approx x_0$. Hence, by the cycle rule, we obtain $\Sigma\vdash \dep(x_1,x_0)\wedge \dots \wedge \dep(x_k,x_{k-1})$.

Suppose that the last axiom used in the deduction of  $x\subseteq y$ is the cycle rule. Then $\Sigma^*\vdash \dep(x_0,x_1)\wedge x_2\subseteq x_1\wedge \dots \wedge \dep(x_{k-1},x_k)\wedge x_0\subseteq x_k$, where $x\subseteq y\in\{x_1\subseteq x_2,\dots,x_k\subseteq x_0\}$. By Lemma \ref{uind_lemma}, we have $\Sigma^*\vdash x_1\subseteq x_2\wedge \dots \wedge x_k\subseteq x_0$, where the length of the deduction for each inclusion dependence is at most that of the corresponding reversed inclusion, i.e., at most of the length $m$. Since $y\subseteq x$ already appeared in the antecedent of the used rule, this means that there is already a shorter deduction for $x\subseteq y$ (the one obtained from Lemma \ref{uind_lemma}), and thus by the induction hypothesis $\Sigma\vdash x\approx y$.
\end{itemize}
\end{proof}

From Theorem \ref{simulation} and Lemma \ref{uind_lemma}, we obtain the following corollary:
\begin{corollary}
The implication problem for functional dependencies and unary marginal identities is in polynomial time.
\end{corollary}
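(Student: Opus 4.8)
The plan is to turn the simulation of Theorem~\ref{simulation} into an explicit decision procedure and bound its running time. Given an instance $\Sigma\cup\{\sigma\}$ of the FD+UMI implication problem, write $\Sigma=\Sigma_{\text{FD}}\cup\Sigma_{\text{UMI}}$ and form $\Sigma^*$ exactly as in Theorem~\ref{simulation}: keep every functional dependency of $\Sigma_{\text{FD}}$, and replace each marginal identity $x\approx y\in\Sigma_{\text{UMI}}$ by the two unary inclusion atoms $x\subseteq y$ and $y\subseteq x$. This step is computable in linear time, and $|\Sigma^*|\leq 2|\Sigma|$, so $\Sigma^*$ is an instance of the FD+UIND implication problem whose size is linear in that of $\Sigma$.

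Next I would branch on the syntactic shape of $\sigma$, which (since it comes from the FD+UMI class) is either a functional dependency or a unary marginal identity. If $\sigma$ is a functional dependency, then by Theorem~\ref{simulation}(i) we have $\Sigma\models\sigma$ iff $\Sigma^*\models\sigma$, so it suffices to run the polynomial-time algorithm of~\cite{CosmadakisKV90} for FD+UIND implication on input $(\Sigma^*,\sigma)$ and return its answer. If $\sigma$ is a unary marginal identity $x\approx y$, then by Theorem~\ref{simulation}(ii) we have $\Sigma\models x\approx y$ iff $\Sigma^*\models x\subseteq y$ and $\Sigma^*\models y\subseteq x$; by Lemma~\ref{uind_lemma} the second condition is in fact a consequence of the first, so it is enough to decide $\Sigma^*\models x\subseteq y$ with the same FD+UIND algorithm. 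In both cases the returned answer is, by the cited theorem and lemma, precisely the answer to $\Sigma\models\sigma$.

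For the complexity bound: the reduction is linear, and we invoke the FD+UIND implication algorithm at most twice on inputs of size $O(|\Sigma|)$; since that algorithm runs in polynomial time~\cite{CosmadakisKV90}, the whole procedure runs in polynomial time, which is the claim.

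I do not expect a real obstacle, since the substantive work is already done in Theorem~\ref{simulation}; the only points deserving a word of care are that $\Sigma^*$ genuinely lies in the class treated in~\cite{CosmadakisKV90} (arbitrary-arity FDs together with \emph{unary} INDs --- and the INDs we introduce are unary by construction), and that~\cite{CosmadakisKV90} settles \emph{finite} implication, which is exactly the notion used throughout this paper, so no issue about the unrestricted case arises.
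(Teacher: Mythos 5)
Your proposal is correct and follows essentially the same route as the paper: reduce to the FD+UIND implication problem via the $\Sigma^*$ of Theorem~\ref{simulation}, use Lemma~\ref{uind_lemma} to observe that only one of the two inclusions $x\subseteq y$, $y\subseteq x$ needs to be checked when $\sigma$ is a marginal identity, and invoke the polynomial-time FD+UIND algorithm of \cite{CosmadakisKV90}. The paper packages the case split into a single translated query $\sigma^*$, but the argument is the same.
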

\begin{proof}
For a set $\Sigma\cup\{\sigma\}$ of functional dependencies and unary marginal identities, let $\Sigma^*$ be as in Theorem \ref{simulation}, and let 
\[
\sigma^*:=
\begin{cases}
\sigma \quad &\text{ if } \sigma \text{ is an FD},\\
x\subseteq y \quad &\text{ if } \sigma=x\approx y.
\end{cases}
\]
By Theorem \ref{simulation} and Lemma \ref{uind_lemma}, to determine whether $\Sigma\models\sigma$, it suffices to check whether $\Sigma^*\models\sigma^*$. Since $\Sigma^*$ and $\sigma^*$ can be computed in polynomial time from $\Sigma$ and $\sigma$, and FD+UIND implication is in polynomial time, also FD+UMI implication is in polynomial time.
\end{proof}
The above result clearly follows also from Theorem \ref{ptime_thm} to be proved in the next section, but as seen, it is already a simple corollary of the simulation result.
 
\section{Complexity of the implication problem for FDs+UMIs+UMDEs}\label{ComplexitySection}

In this section, we consider the implication problem for FDs+UMIs+UMDEs. We show that this implication problem also has a polynomial-time algorithm. Denote by $D$ the set of the variables appearing in $\Sigma\cup\{\sigma\}$, and suppose that $\Sigma$ is partitioned into the sets $\Sigma_{\text{FD}}$, $\Sigma_{\text{UMI}}$, and $\Sigma_{\text{UMDE}}$ that consist of the FDs, UMIs, and UMDEs from $\Sigma$, respectively.

First, note that if $\sigma$ is a UMI, it suffices to check whether $\sigma\in\cl(\Sigma_{\text{UMI}})$ as the inference rules for FDs and UMDEs do not produce new UMIs. Notice that $\Sigma_{\text{UMI}}$ can be viewed as an undirected graph $G=(D,\approx)$ such that each $x\approx y\in\Sigma_{\text{UMI}}$ means there is an undirected edge between $x$ and $y$. (We may assume that each vertex has a self-loop and all the edges are undirected, even though the corresponding UMIs might not be explicitly listed in $\Sigma_{\text{UMI}}$.) Suppose that $\sigma=x\approx y$. Then $\sigma\in\cl(\Sigma_{\text{UMI}})$ iff $y$ is reachable from $x$ in $G$. This can be checked in linear-time by using a breadth-first search algorithm.

Hence, we now treat the case where $\sigma$ is an FD or a UMDE. The idea behind the following part of the algorithm is again based on the algorithm \cite{CosmadakisKV90} for the implication problem for FDs and UINDs. In the FD+UMI case in Section \ref{simulation_ch}, we could use the algorithm for FD+UIND and just slightly modify the set $\Sigma$, but since now we are also dealing with UMDEs, so we have to make some modifications.

The implication problem for FDs is known to be linear-time computable by the \textit{Beeri-Bernstein algorithm} \cite{BeeriB79}. Given a set $\Delta\cup\{\dep(\bar{x},\bar{y})\}$ of FDs, the Beeri-Bernstein algorithm computes the set $\fdclosure(\bar{x},\Delta)=\{z\mid \Delta\models\dep(\bar{x},z)\}$. Then $\Delta\models\dep(\bar{x},\bar{y})$ holds iff $\var(\bar{y})\subseteq\fdclosure(\bar{x},\Delta)$. (Note that the tuple $\bar{x}$ can be empty.) The Beeri-Bernstein algorithm keeps two lists: $\fdlist(\bar{x})$ and $\attrlist(\bar{x})$. The set $\fdlist(\bar{x})$ consists of FDs and is updated in order to keep the algorithm in linear-time (for more details, see \cite{BeeriB79}). The set $\attrlist(\bar{x})$ lists the variables that are found to be functionally dependent on $\bar{x}$. The algorithm\footnote{This presentation of the algorithm is based on \cite{CosmadakisKV90}.} is as follows:
\begin{itemize}
\item[(i)] Initialization: 

Let $\fdlist(\bar{x})=\Delta$ and $\attrlist(\bar{x})=\var(\bar{x})$.
\item[(ii)] Repeat the following until no new variables are added to $\attrlist(\bar{x})$:
\begin{itemize}
\item[(a)]For all $z\in\attrlist(\bar{x})$, if there is an FD $\dep(u_0\dots u_k,\bar{u}')\in\fdlist(\bar{x})$ such that $z=u_i$ for some $i=0,\dots,k$, replace $\dep(u_0\dots u_k,\bar{u}')$ with $\dep(u_0\dots u_{i-1}u_{i+1}\dots u_k,\bar{u}')$.
\item[(b)] For each constant atom $\dep(\bar{u})\in\fdlist(\bar{x})$, add $\var(\bar{u})$ to $\attrlist(\bar{x})$.
\end{itemize} 
\item[(ii)] Return $\attrlist(\bar{x})$.
\end{itemize}
Note that the Beeri-Bernstein algorithm can be run incrementally, i.e., it can first examine a set of FDs and find their closure and then, if new FDs are added afterwards, it can continue computing the closure with respect to this larger set of FDs. \cite{CosmadakisKV90}

Let $x\in D$, and define $\desclist(x)=\{y\in D\mid\Sigma\models\dep(x,y)\}$. We will construct an algorithm, called $\mathsf{Closure}(x,\Sigma)$, that computes the set $\desclist(x)$ by utilizing the Beeri-Bernstein algorithm. If we view the variables $D$ and the atoms from $\cl(\Sigma)$ as a multigraph of Definition \ref{graph_def}, the set $\desclist(x)$ consists of the variables that are red descendants of $x$. Since FDs and UMIs interact only via UMDEs, it is sufficient to construct the multigraph only for FDs and UMDEs. We only consider the UMIs in the initialization step, as they imply UMDEs. The algorithm $\mathsf{Closure}(x,\Sigma)$ keeps and updates $\fdlist(y)$ and $\attrlist(y)$ for each $y\in D$. When the algorithm halts, $\attrlist(x)=\desclist(x)$.  The algorithm is as follows:
\begin{itemize}
\item[(i)] Initialization:
\begin{itemize}
\item[(a)] For each variable $y\in D$, $\attrlist(y)$ is initialized to $\{y\}$ and $\fdlist(y)$ to $\Sigma_{\text{FD}}$. We begin by running the Beeri-Bernstein algorithm for each $y\in D$. The algorithm adds to $\attrlist(y)$ all variables $z\in D$ that are functionally dependent on $y$ under $\Sigma_{\text{FD}}$.  Note that each $\fdlist(y)$ is also modified.
\item[(b)] We then construct a multigraph $G$ similarly as before, but the black edges are replaced with blue ones, i.e. the set of vertices is $D$, and there is a directed red edge from $y$ to $z$ iff $z\in\attrlist(y)$, and there is an undirected blue edge between $y$ and $z$ iff at least one of the following $y\approx z,z\approx y, y\approx^* z, z\approx^* y$ is in $\Sigma_{\text{UMI}}\cup\Sigma_{\text{UMDE}}$. 
\end{itemize}  

\item[(ii)] Repeat the following iteration until halting and then return $\attrlist(x)$:
\begin{itemize}
\item[(1)] Ignore the colors of the edges, and find the strongly connected components of $G$.
\item[(2)] For any $y$ and $z$ that are in the same strongly connected component, if there is a red directed edge from $y$ to $z$, check whether there is a red directed edge from $z$ to $y$ and a blue undirected edge between $y$ and $z$. If either or both are missing, add the missing edges.
\item[(3)] If no new red edges were added in step (2), then halt. Otherwise, add all the new red edges to $\fdlist(y)$.
\item[(4)] For each $y$, continue the Beeri-Bernstein algorithm.
\item[(5)]If for all $y$, no new variables were added to $\attrlist(y)$ in step (4), then halt. Otherwise, for each new $z\in\attrlist(y)$, add to $G$ a directed red edge from $y$ to $z$.
\end{itemize}

\end{itemize}

Let $n=|D|$. Since we can add at most $n^2$ new red edges to the graph, $|\fdlist(y)|\leq|\Sigma_{\text{FD}}|+O(n^2)$ for each $y\in D$. Thus the Beeri-Bernstein algorithm takes $O(|\Sigma_{\text{FD}}|+n^2)$ time for each $y\in D$. Because we run the algorithm for all $y\in D$, the total time is $O(n(n^2+|\Sigma_{\text{FD}}|))$. 

In each iteration, before running the Beeri-Bernstein algorithm (step (4)), we find the strongly connected components (step (1)) and check all pairs of vertices for red edges and add red and blue edges if needed (step (2)). The time required for finding the strongly connected components is linear in the number of vertices and edges \cite{Tarjan72}, so it takes at most $O(|D|+n^2+|\Sigma_{\text{UMI}}|+|\Sigma_{\text{UMDE}}|)$, i.e. $O(n^2)$ time. Going through the pairs of vertices in step (2) also takes $O(n^2)$ time. 

An iteration proceeds to step (4) only if new red edges are added in step (2). In each iteration, if new red edges are added, they merge together two or more red cliques in the same strongly connected component. In the beginning we have at most $n$ such cliques, so the algorithm cannot take more than $O(n)$ iterations. Since the Beeri-Bernstein algorithm can operate incrementally, the total running time of $\mathsf{Closure}(x,\Sigma)$ is then $O(n^3+n|\Sigma_{\text{FD}}|)$.

If $\sigma=\dep(x,y)$, then $\Sigma\models\sigma$ iff $y\in\desclist(x)$. If $\sigma$ is a constant atom or a non-unary FD, in order to check whether $\Sigma\models\sigma$, it suffices to check whether $\Sigma_{\text{FD}}\cup\Delta\models\sigma$, where $\Delta=\{\dep(y,\bar{z})\mid y\in D, \var(\bar{z})=\desclist(y)\}$. This, again, can be done using the Beeri-Bernstein algorithm. Since the  order of the variables $\var(\bar{z})$ in the tuple $\bar{z}$ does not matter for each FD $\dep(y,\bar{z})$, we may assume that the set $\Delta$ contains only one FD for each $y\in D$. Thus this step only adds $O(|\Sigma_{\text{FD}}|+ n)$ extra time. (Note that constructing the set $\Delta$ does not affect the time-bound because $\mathsf{Closure}(x,\Sigma)$ actually computes $\desclist(y)$ for all $y\in D$.) 

If $\sigma=x\approx^* y$, we can use the blue (undirected) subgraph of $G$ to check whether $\Sigma\models\sigma$. After the algorithm $\mathsf{Closure}(x,\Sigma)$ has halted, we only need to check whether $y$ is reachable from $x$ in the subgraph, which is a linear-time problem, as noted before.

Therefore, in the worst case the above algorithm takes $O(n^3+n|\Sigma_{FD}|)$ time and we have the following theorem:
\begin{theorem}\label{ptime_thm}
The implication problem for functional dependencies, unary marginal identities, and unary marginal distribution equivalences is in polynomial time.
\end{theorem}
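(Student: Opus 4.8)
The plan is to establish the theorem by verifying that the algorithm $\mathsf{Closure}(x,\Sigma)$ described above, together with a couple of auxiliary reachability checks, correctly decides $\Sigma\models\sigma$, and then to collect the running-time estimates that have already been derived. First I would split $\Sigma$ into $\Sigma_{\text{FD}}$, $\Sigma_{\text{UMI}}$, $\Sigma_{\text{UMDE}}$ and case on the shape of $\sigma$. If $\sigma = x\approx y$ is a UMI, note that none of FD1--FD3, the UMDE axioms, or the cycle rules ever produces a new UMI, so $\Sigma\models\sigma$ holds precisely when $\sigma\in\cl(\Sigma_{\text{UMI}})$, which by symmetry and transitivity of $\approx$ amounts to $y$ being reachable from $x$ in the undirected graph $(D,\Sigma_{\text{UMI}})$ --- a breadth-first search, hence linear time. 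If $\sigma = x\approx^* y$ is a UMDE, I would run $\mathsf{Closure}(x,\Sigma)$ and then test whether $y$ is reachable from $x$ in the blue subgraph of the final graph $G$; correctness here rests on Lemma \ref{graph_prop_lemma}(iii)--(iv), which guarantees that the UMDEs in $\cl(\Sigma)$ are exactly this blue reachability relation (each strongly connected component is a single blue clique, blue is transitively closed, and blue contains black).

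The substance of the argument is the correctness of $\mathsf{Closure}(x,\Sigma)$, namely that upon halting $\attrlist(y) = \desclist(y) = \{z \mid \Sigma\models\dep(y,z)\}$ for every $y\in D$. Soundness (the ``$\subseteq$'' inclusion) I would prove as a loop invariant: every red edge $y\to z$ ever present in $G$ witnesses $\Sigma\models\dep(y,z)$ --- the initial ones by soundness of FD1--FD3 and the Beeri--Bernstein algorithm, those added in step (5) the same way, and those added in step (2) because a red edge inside a strongly connected component together with the reverse red/blue path forced by strong connectedness instantiates a $k$-cycle rule (using Lemma \ref{graph_prop_lemma}(iv) to reroute the returning path into an alternating red/UMDE cycle), whose soundness was established in Section \ref{axioms}; the blue edges added in step (2) are sound for the same reason. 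The completeness direction (``$\supseteq$'') is the step I expect to be the main obstacle: I would argue by induction on the number of cycle-rule applications in a shortest derivation of $\dep(x,z)$, exploiting that the cycle rule is the \emph{only} inference combining FDs with the probabilistic atoms. Between successive cycle-rule uses the derivable FDs are exactly those Beeri--Bernstein computes, which the algorithm runs incrementally to fixpoint; each time a new cycle is detected, step (2) adds precisely the reversed red edge (and blue edge) the cycle rule licenses, so after the $i$-th such merge $G$ contains all red edges derivable with at most $i$ cycle-rule uses. Since every step (2) that progresses strictly merges red cliques within some component, the iteration stabilizes, and at the fixpoint no further cycle rule or FD rule can fire, yielding $\attrlist(y) = \desclist(y)$.

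It remains to read off the answer for FDs and constancy atoms and to account for the time. For $\sigma = \dep(x,y)$ I simply test $y\in\desclist(x)$. For a non-unary $\dep(\bar{x},\bar{y})$ or a constancy atom I would run Beeri--Bernstein on $\Sigma_{\text{FD}}\cup\Delta$ where $\Delta = \{\dep(y,\bar{z}) \mid y\in D,\ \var(\bar{z}) = \desclist(y)\}$; the equivalence $\Sigma\models\sigma$ iff $\Sigma_{\text{FD}}\cup\Delta\models\sigma$ follows from Lemma \ref{graph_prop_lemma}(v), which shows that the only functional-dependency content contributed by the probabilistic atoms is, for each variable, the closure under its red descendants --- exactly what $\Delta$ records. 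Finally the time bound: $\mathsf{Closure}(x,\Sigma)$ runs $O(n)$ progressing iterations (each merges red cliques inside a component, and there are at most $n$ of them), each costing $O(n^2)$ for the Tarjan strongly-connected-component computation and the pairwise scan in step (2), while the incremental Beeri--Bernstein runs amount to $O(|\Sigma_{\text{FD}}|+n^2)$ per vertex and $O(n(|\Sigma_{\text{FD}}|+n^2))$ over all $n$ vertices; the final Beeri--Bernstein call for a non-unary $\sigma$ adds only $O(|\Sigma_{\text{FD}}|+n)$, and the UMI/UMDE reachability checks are linear. The total is $O(n^3 + n|\Sigma_{\text{FD}}|)$, which is polynomial, giving the theorem.
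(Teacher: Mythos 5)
Your proposal is correct and follows essentially the same route as the paper: the same case split on the shape of $\sigma$, the same $\mathsf{Closure}(x,\Sigma)$ algorithm built on incremental Beeri--Bernstein runs interleaved with strongly-connected-component detection and cycle-rule edge additions, the same reduction of nonunary FDs to $\Sigma_{\text{FD}}\cup\Delta$, and the same $O(n^3+n|\Sigma_{\text{FD}}|)$ accounting. The only difference is that you make the correctness argument (soundness as a loop invariant, completeness by induction on cycle-rule applications) more explicit than the paper does, which is a welcome addition rather than a divergence.
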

Note that the time-bound of the algorithm is the same as that of the algorithm for the implication problem for FDs+UINDs in \cite{CosmadakisKV90}.

\section{Conclusion}

We have presented a sound and complete infinite axiomatization for functional dependence, unary marginal identity, and unary marginal distribution equivalence, and shown that there is no finite axiomatization. The class of these dependencies was also shown to have Armstrong relations. We also showed that FD+UMI implication can be simulated with FD+UIND implication, and therefore it is in polynomial time. The implication problem for the full class of these dependencies was also shown to be in polynomial-time by constructing an algorithm. The following questions remain open:
\begin{itemize}
\item[(i)] Can we extend the axiomatization to nonunary marginal identity and marginal distribution equivalence?
\item[(ii)] Can we find an axiomatization for probabilistic independence and marginal identity (and marginal distribution equivalence)?
\item[(iii)] Can we find an axiomatization for (unary) functional dependence, probabilistic independence, (unary) marginal identity and marginal distribution equivalence? 
\end{itemize}

Concerning the first question, one should not rush to conjectures based on the results in the qualitative (non-probabilistic) case. Our results for functional dependencies and unary variants of marginal indentities and marginal distribution equivalences were obtained by methods that resemble the ones used in the case of functional dependence and unary inclusion dependence in \cite{CosmadakisKV90}.  The implication problem for functional and inclusion dependencies (where also nonunary inclusions are allowed) is undecidable for both finite and unrestricted implication \cite{chandra85}, so one might think that allowing nonunary marginal identities with functional dependencies makes the corresponding (finite) implication problem undecidable. However, it is not clear whether this is the case, as the reduction proof for the undecidability result does not directly generalize to the probabilistic case.

As for the second question, the implication problems for probabilistic independence and  marginal identity have already been studied separately: the problem for probabilistic independence has been axiomatized in \cite{geiger:1991}, and the problem for marginal identity (over finite probability distributions) has recently been axiomatized in \cite{hannula2021tractability}. Obtaining an axiomatization for both of these dependencies together is appealing, given how commonly IID assumption is used in probability theory and statistics. For probabilistic independence and \textit{unary} marginal identity and distribution equivalence, the interaction seems simple, but as in the case of functional dependence, adding \textit{nonunary} marginal identities and distribution equivalences introduces additional
complexity.

In the third question, we have both functional dependence and probabilistic independence together with marginal identity and marginal distribution equivalence. In the qualitative case, the decidability of the implication problem for independence atoms and (nonunary) functional dependencies is still open, but the axiomatization for independence atoms, \textit{unary} functional dependencies and \textit{unary} inclusion dependencies was introduced in \cite{hannula18}. This suggests that it might be useful to consider just unary functional dependencies with probabilistic independence and unary marginal identity and distribution equivalence. This already makes the interaction more interesting than in the case of only probabilistic independence and unary marginal identity and distribution equivalence because we would need the cycle rules.

In addition to probabilistic independence, we can consider probabilistic conditional independence $\pci{\bar{x}}{\bar{y}}{\bar{z}}$ which states that $\bar{y}$ and $\bar{z}$ are conditionally independent given $\bar{x}$. A probabilistic conditional independency of the form $\pci{\bar{x}}{\bar{y}}{\bar{y}}$\footnote{Sometimes the tuples of variables are assumed to be disjoint which disallows the probabilistic conditional independencies of this form. In the context of team semantics, such an assuption is usually not made.} is equivalent with the functional dependency $\dep(\bar{x},\bar{y})$, so functional dependence can be seen as a special case of  probabilistic conditional independence. It has been known since the 1990s that the implication problem for probabilistic conditional independence has no finite complete axiomatization \cite{studeny:1993}, and more recently that it is undecidable \cite{Li22}. Still, sound (but not complete) axiomatizations, the so-called graphoid and semigraphoid axioms, exist and are often used in practice. Every sound inference rule for probabilistic conditional independence that has at most two antecedents can be derived from the semigraphoid axioms \cite{studeny94}. It would be interesting to find axiomatizations for marginal identity and  some subclasses of probabilistic conditional independence. The results of this paper imply that if we consider a subclass that contains probabilistic conditional independencies of the form $\pci{\bar{x}}{\bar{y}}{\bar{y}}$, we effectively include functional dependencies and thus we will also have the cycle rules.

\begin{acks}
The author was supported by the Finnish Academy of Science and Letters (the Vilho, Yrj\"o and Kalle V\"ais\"al\"a Foundation) and by grant 345634 of the Academy of Finland. I would like to thank all the anonymous referees for valuable comments regarding both this and the previous conference version of the paper, and Miika Hannula and Juha Kontinen for useful discussions and advice.
\end{acks}

\bibliographystyle{ACM-Reference-Format}
\bibliography{biblio}


\begin{thebibliography}{26}


\ifx \showCODEN    \undefined \def \showCODEN     #1{\unskip}     \fi
\ifx \showDOI      \undefined \def \showDOI       #1{#1}\fi
\ifx \showISBNx    \undefined \def \showISBNx     #1{\unskip}     \fi
\ifx \showISBNxiii \undefined \def \showISBNxiii  #1{\unskip}     \fi
\ifx \showISSN     \undefined \def \showISSN      #1{\unskip}     \fi
\ifx \showLCCN     \undefined \def \showLCCN      #1{\unskip}     \fi
\ifx \shownote     \undefined \def \shownote      #1{#1}          \fi
\ifx \showarticletitle \undefined \def \showarticletitle #1{#1}   \fi
\ifx \showURL      \undefined \def \showURL       {\relax}        \fi
\providecommand\bibfield[2]{#2}
\providecommand\bibinfo[2]{#2}
\providecommand\natexlab[1]{#1}
\providecommand\showeprint[2][]{arXiv:#2}

\bibitem[Abramsky et~al\mbox{.}(2021)]%
        {abramsky2021team}
\bibfield{author}{\bibinfo{person}{Samson Abramsky}, \bibinfo{person}{Joni
  Puljuj\"{a}rvi}, {and} \bibinfo{person}{Jouko V\"{a}\"{a}n\"{a}nen}.}
  \bibinfo{year}{2021}\natexlab{}.
\newblock \bibinfo{title}{Team Semantics and Independence Notions in Quantum
  Physics}.
\newblock
\newblock
\showeprint[arxiv]{2107.10817}~[math.LO]


\bibitem[Albert and Gr\"{a}del(2022)]%
        {ALBERT2022103088}
\bibfield{author}{\bibinfo{person}{Rafael Albert} {and} \bibinfo{person}{Erich
  Gr\"{a}del}.} \bibinfo{year}{2022}\natexlab{}.
\newblock \showarticletitle{Unifying hidden-variable problems from quantum
  mechanics by logics of dependence and independence}.
\newblock \bibinfo{journal}{\emph{Annals of Pure and Applied Logic}}
  (\bibinfo{year}{2022}), \bibinfo{pages}{103088}.
\newblock
\showISSN{0168-0072}
\urldef\tempurl%
\url{https://doi.org/10.1016/j.apal.2022.103088}
\showDOI{\tempurl}


\bibitem[Armstrong(1974)]%
        {armstrong74}
\bibfield{author}{\bibinfo{person}{William~W. Armstrong}.}
  \bibinfo{year}{1974}\natexlab{}.
\newblock \showarticletitle{{Dependency Structures of Data Base
  Relationships.}}. In \bibinfo{booktitle}{\emph{Proc. of IFIP World Computer
  Congress}}. \bibinfo{pages}{580--583}.
\newblock


\bibitem[Beeri and Bernstein(1979)]%
        {BeeriB79}
\bibfield{author}{\bibinfo{person}{Catriel Beeri} {and}
  \bibinfo{person}{Philip~A. Bernstein}.} \bibinfo{year}{1979}\natexlab{}.
\newblock \showarticletitle{Computational Problems Related to the Design of
  Normal Form Relational Schemas}.
\newblock \bibinfo{journal}{\emph{{ACM} Trans. Database Syst.}}
  \bibinfo{volume}{4}, \bibinfo{number}{1} (\bibinfo{year}{1979}),
  \bibinfo{pages}{30--59}.
\newblock
\urldef\tempurl%
\url{https://doi.org/10.1145/320064.320066}
\showDOI{\tempurl}


\bibitem[Beeri et~al\mbox{.}(1984)]%
        {beeri84}
\bibfield{author}{\bibinfo{person}{Catriel Beeri}, \bibinfo{person}{Martin
  Dowd}, \bibinfo{person}{Ronald Fagin}, {and} \bibinfo{person}{Richard
  Statman}.} \bibinfo{year}{1984}\natexlab{}.
\newblock \showarticletitle{On the Structure of Armstrong Relations for
  Functional Dependencies}.
\newblock \bibinfo{journal}{\emph{J. ACM}} \bibinfo{volume}{31},
  \bibinfo{number}{1} (\bibinfo{date}{jan} \bibinfo{year}{1984}),
  \bibinfo{pages}{30–46}.
\newblock
\showISSN{0004-5411}
\urldef\tempurl%
\url{https://doi.org/10.1145/2422.322414}
\showDOI{\tempurl}


\bibitem[Chandra and Vardi(1985)]%
        {chandra85}
\bibfield{author}{\bibinfo{person}{Ashok~K. Chandra} {and}
  \bibinfo{person}{Moshe~Y. Vardi}.} \bibinfo{year}{1985}\natexlab{}.
\newblock \showarticletitle{The Implication Problem for Functional and
  Inclusion Dependencies is Undecidable}.
\newblock \bibinfo{journal}{\emph{SIAM J. Comput.}} \bibinfo{volume}{14},
  \bibinfo{number}{3} (\bibinfo{year}{1985}), \bibinfo{pages}{671--677}.
\newblock
\urldef\tempurl%
\url{https://doi.org/10.1137/0214049}
\showDOI{\tempurl}


\bibitem[Cosmadakis et~al\mbox{.}(1990)]%
        {CosmadakisKV90}
\bibfield{author}{\bibinfo{person}{Stavros~S. Cosmadakis},
  \bibinfo{person}{Paris~C. Kanellakis}, {and} \bibinfo{person}{Moshe~Y.
  Vardi}.} \bibinfo{year}{1990}\natexlab{}.
\newblock \showarticletitle{Polynomial-Time Implication Problems for Unary
  Inclusion Dependencies}.
\newblock \bibinfo{journal}{\emph{J. ACM}} \bibinfo{volume}{37},
  \bibinfo{number}{1} (\bibinfo{year}{1990}), \bibinfo{pages}{15--46}.
\newblock


\bibitem[Durand et~al\mbox{.}(2018)]%
        {DurandHKMV18}
\bibfield{author}{\bibinfo{person}{Arnaud Durand}, \bibinfo{person}{Miika
  Hannula}, \bibinfo{person}{Juha Kontinen}, \bibinfo{person}{Arne Meier},
  {and} \bibinfo{person}{Jonni Virtema}.} \bibinfo{year}{2018}\natexlab{}.
\newblock \showarticletitle{Approximation and dependence via multiteam
  semantics}.
\newblock \bibinfo{journal}{\emph{Ann. Math. Artif. Intell.}}
  \bibinfo{volume}{83}, \bibinfo{number}{3-4} (\bibinfo{year}{2018}),
  \bibinfo{pages}{297--320}.
\newblock
\urldef\tempurl%
\url{https://doi.org/10.1007/s10472-017-9568-4}
\showURL{%
\tempurl}


\bibitem[Fagin(1982)]%
        {Fagin82}
\bibfield{author}{\bibinfo{person}{Ronald Fagin}.}
  \bibinfo{year}{1982}\natexlab{}.
\newblock \showarticletitle{Horn clauses and database dependencies}.
\newblock \bibinfo{journal}{\emph{J. ACM}}  \bibinfo{volume}{29}
  (\bibinfo{year}{1982}), \bibinfo{pages}{952--985}.
\newblock


\bibitem[Fagin and Vardi(1983)]%
        {FAGIN198313}
\bibfield{author}{\bibinfo{person}{Ronald Fagin} {and}
  \bibinfo{person}{Moshe~Y. Vardi}.} \bibinfo{year}{1983}\natexlab{}.
\newblock \showarticletitle{Armstrong databases for functional and inclusion
  dependencies}.
\newblock \bibinfo{journal}{\emph{Inform. Process. Lett.}}
  \bibinfo{volume}{16}, \bibinfo{number}{1} (\bibinfo{year}{1983}),
  \bibinfo{pages}{13--19}.
\newblock
\showISSN{0020-0190}
\urldef\tempurl%
\url{https://doi.org/10.1016/0020-0190(83)90005-4}
\showDOI{\tempurl}


\bibitem[Galliani(2012)]%
        {galliani12}
\bibfield{author}{\bibinfo{person}{Pietro Galliani}.}
  \bibinfo{year}{2012}\natexlab{}.
\newblock \showarticletitle{Inclusion and exclusion dependencies in team
  semantics: On some logics of imperfect information}.
\newblock \bibinfo{journal}{\emph{Annals of Pure and Applied Logic}}
  \bibinfo{volume}{163}, \bibinfo{number}{1} (\bibinfo{year}{2012}),
  \bibinfo{pages}{68 -- 84}.
\newblock


\bibitem[Geiger et~al\mbox{.}(1991)]%
        {geiger:1991}
\bibfield{author}{\bibinfo{person}{Dan Geiger}, \bibinfo{person}{Azaria Paz},
  {and} \bibinfo{person}{Judea Pearl}.} \bibinfo{year}{1991}\natexlab{}.
\newblock \showarticletitle{Axioms and algorithms for inferences involving
  probabilistic independence}.
\newblock \bibinfo{journal}{\emph{Information and Computation}}
  \bibinfo{volume}{91}, \bibinfo{number}{1} (\bibinfo{year}{1991}),
  \bibinfo{pages}{128--141}.
\newblock


\bibitem[Gr\"adel and V\"a\"an\"anen(2013)]%
        {gradel10}
\bibfield{author}{\bibinfo{person}{Erich Gr\"adel} {and} \bibinfo{person}{Jouko
  V\"a\"an\"anen}.} \bibinfo{year}{2013}\natexlab{}.
\newblock \showarticletitle{Dependence and Independence}.
\newblock \bibinfo{journal}{\emph{Studia Logica}} \bibinfo{volume}{101},
  \bibinfo{number}{2} (\bibinfo{year}{2013}), \bibinfo{pages}{399--410}.
\newblock
\showISSN{0039-3215}
\urldef\tempurl%
\url{https://doi.org/10.1007/s11225-013-9479-2}
\showDOI{\tempurl}


\bibitem[Hannula et~al\mbox{.}(2019)]%
        {jelia19}
\bibfield{author}{\bibinfo{person}{Miika Hannula}, \bibinfo{person}{{\AA}sa
  Hirvonen}, \bibinfo{person}{Juha Kontinen}, \bibinfo{person}{Vadim Kulikov},
  {and} \bibinfo{person}{Jonni Virtema}.} \bibinfo{year}{2019}\natexlab{}.
\newblock \showarticletitle{Facets of Distribution Identities in Probabilistic
  Team Semantics}. In \bibinfo{booktitle}{\emph{{JELIA}}}
  \emph{(\bibinfo{series}{Lecture Notes in Computer Science},
  Vol.~\bibinfo{volume}{11468})}. \bibinfo{publisher}{Springer},
  \bibinfo{pages}{304--320}.
\newblock


\bibitem[Hannula et~al\mbox{.}(2016)]%
        {HANNULA16}
\bibfield{author}{\bibinfo{person}{Miika Hannula}, \bibinfo{person}{Juha
  Kontinen}, {and} \bibinfo{person}{Sebastian Link}.}
  \bibinfo{year}{2016}\natexlab{}.
\newblock \showarticletitle{On the finite and general implication problems of
  independence atoms and keys}.
\newblock \bibinfo{journal}{\emph{J. Comput. System Sci.}}
  \bibinfo{volume}{82}, \bibinfo{number}{5} (\bibinfo{year}{2016}),
  \bibinfo{pages}{856--877}.
\newblock
\showISSN{0022-0000}
\urldef\tempurl%
\url{https://doi.org/10.1016/j.jcss.2016.02.007}
\showDOI{\tempurl}


\bibitem[Hannula and Link(2018)]%
        {hannula18}
\bibfield{author}{\bibinfo{person}{Miika Hannula} {and}
  \bibinfo{person}{Sebastian Link}.} \bibinfo{year}{2018}\natexlab{}.
\newblock \showarticletitle{On the Interaction of Functional and Inclusion
  Dependencies with Independence Atoms}. In \bibinfo{booktitle}{\emph{Database
  Systems for Advanced Applications}}, \bibfield{editor}{\bibinfo{person}{Jian
  Pei}, \bibinfo{person}{Yannis Manolopoulos}, \bibinfo{person}{Shazia Sadiq},
  {and} \bibinfo{person}{Jianxin Li}} (Eds.). \bibinfo{publisher}{Springer
  International Publishing}, \bibinfo{address}{Cham},
  \bibinfo{pages}{353--369}.
\newblock
\showISBNx{978-3-319-91458-9}


\bibitem[Hannula and Virtema(2021)]%
        {hannula2021tractability}
\bibfield{author}{\bibinfo{person}{Miika Hannula} {and} \bibinfo{person}{Jonni
  Virtema}.} \bibinfo{year}{2021}\natexlab{}.
\newblock \bibinfo{title}{Tractability frontiers in probabilistic team
  semantics and existential second-order logic over the reals}.
\newblock
\newblock
\showeprint[arxiv]{2012.12830}~[cs.LO]


\bibitem[Hirvonen(2022)]%
        {hirvonen22}
\bibfield{author}{\bibinfo{person}{Minna Hirvonen}.}
  \bibinfo{year}{2022}\natexlab{}.
\newblock \showarticletitle{The Implication Problem for Functional Dependencies
  and Variants of Marginal Distribution Equivalences}. In
  \bibinfo{booktitle}{\emph{Foundations of Information and Knowledge Systems}},
  \bibfield{editor}{\bibinfo{person}{Ivan Varzinczak}} (Ed.).
  \bibinfo{publisher}{Springer International Publishing},
  \bibinfo{address}{Cham}, \bibinfo{pages}{130--146}.
\newblock
\showISBNx{978-3-031-11321-5}


\bibitem[Hodges(1997)]%
        {hodges97}
\bibfield{author}{\bibinfo{person}{Wilfrid Hodges}.}
  \bibinfo{year}{1997}\natexlab{}.
\newblock \showarticletitle{{C}ompositional {S}emantics for a {L}anguage of
  {I}mperfect {I}nformation}.
\newblock \bibinfo{journal}{\emph{Journal of the Interest Group in Pure and
  Applied Logics}}  \bibinfo{volume}{5 (4)} (\bibinfo{year}{1997}),
  \bibinfo{pages}{539--563}.
\newblock


\bibitem[Langeveldt and Link(2010)]%
        {LANGEVELDT2010352}
\bibfield{author}{\bibinfo{person}{Warren-Dean Langeveldt} {and}
  \bibinfo{person}{Sebastian Link}.} \bibinfo{year}{2010}\natexlab{}.
\newblock \showarticletitle{Empirical evidence for the usefulness of Armstrong
  relations in the acquisition of meaningful functional dependencies}.
\newblock \bibinfo{journal}{\emph{Information Systems}} \bibinfo{volume}{35},
  \bibinfo{number}{3} (\bibinfo{year}{2010}), \bibinfo{pages}{352--374}.
\newblock
\showISSN{0306-4379}
\urldef\tempurl%
\url{https://doi.org/10.1016/j.is.2009.11.002}
\showDOI{\tempurl}


\bibitem[Li(2022)]%
        {Li22}
\bibfield{author}{\bibinfo{person}{Cheuk~Ting Li}.}
  \bibinfo{year}{2022}\natexlab{}.
\newblock \bibinfo{title}{Undecidability of Network Coding, Conditional
  Information Inequalities, and Conditional Independence Implication}.
\newblock
\newblock
\urldef\tempurl%
\url{https://doi.org/10.48550/ARXIV.2205.11461}
\showDOI{\tempurl}


\bibitem[Mannila and R{\"a}ih{\"a}(1986)]%
        {mannila86}
\bibfield{author}{\bibinfo{person}{Heikki Mannila} {and}
  \bibinfo{person}{Kari-Jouko R{\"a}ih{\"a}}.} \bibinfo{year}{1986}\natexlab{}.
\newblock \showarticletitle{Design by example: An application of Armstrong
  relations}.
\newblock \bibinfo{journal}{\emph{J. Comput. System Sci.}}
  \bibinfo{volume}{33}, \bibinfo{number}{2} (\bibinfo{year}{1986}),
  \bibinfo{pages}{126--141}.
\newblock
\showISSN{0022-0000}
\urldef\tempurl%
\url{https://doi.org/10.1016/0022-0000(86)90015-2}
\showDOI{\tempurl}


\bibitem[Studen\'y(1992)]%
        {studeny:1993}
\bibfield{author}{\bibinfo{person}{Milan Studen\'y}.}
  \bibinfo{year}{1992}\natexlab{}.
\newblock \showarticletitle{Conditional independence relations have no finite
  complete characterization}.
\newblock \bibinfo{publisher}{Kluwer}, \bibinfo{pages}{377--396}.
\newblock


\bibitem[Studen\'{y}(1994)]%
        {studeny94}
\bibfield{author}{\bibinfo{person}{Milan Studen\'{y}}.}
  \bibinfo{year}{1994}\natexlab{}.
\newblock \showarticletitle{Semigraphoids are two-antecedental approximations
  of stochastic conditional independence models}.
\newblock \bibinfo{journal}{\emph{Uncertainty in Artificial Intelligence,
  Proceedings of the Tenth Conference}} (\bibinfo{year}{1994}),
  \bibinfo{pages}{546--552}.
\newblock


\bibitem[Tarjan(1972)]%
        {Tarjan72}
\bibfield{author}{\bibinfo{person}{Robert Tarjan}.}
  \bibinfo{year}{1972}\natexlab{}.
\newblock \showarticletitle{Depth first search and linear graph algorithms}.
\newblock \bibinfo{journal}{\emph{SIAM JOURNAL ON COMPUTING}}
  \bibinfo{volume}{1}, \bibinfo{number}{2} (\bibinfo{year}{1972}).
\newblock


\bibitem[V\"a\"an\"anen(2007)]%
        {vaananen07}
\bibfield{author}{\bibinfo{person}{Jouko V\"a\"an\"anen}.}
  \bibinfo{year}{2007}\natexlab{}.
\newblock \bibinfo{booktitle}{\emph{Dependence Logic}}.
\newblock \bibinfo{publisher}{Cambridge University Press}.
\newblock


\end{thebibliography}

\end{document}